\algrenewcommand\algorithmicrequire{\textbf{Input:}}
\algrenewcommand\algorithmicensure{\textbf{Output:}}
\begin{document}

\title{A Composable Coreset for $k$-Center in Doubling Metrics\thanks{The preliminary version of this paper has appeared in Proceedings of the 30th Canadian Conference on Computational Geometry, (CCCG 2018) \cite{conf}}
}
\date{}                     
\setcounter{Maxaffil}{0}
\renewcommand\Affilfont{\itshape\small}

\author[1]{Sepideh Aghamolaei}
\author[2]{Mohammad Ghodsi}

\affil[1]{Department of Computer Engineering,
        Sharif University of Technology, aghamolaei@ce.sharif.edu}
\affil[2]{Department of Computer Engineering,
        Sharif University of Technology, School of Computer Science, Institute for Research in Fundamental Sciences (IPM), ghodsi@sharif.edu}

\newtheorem{definition}{Definition}
\newtheorem{theorem}{Theorem}
\newtheorem{example}{Example}
\newtheorem{lemma}{Lemma}
\maketitle

\begin{abstract}
A set of points $P$ in a metric space and a constant integer $k$ are given. The $k$-center problem finds $k$ points as \textit{centers} among $P$, such that the maximum distance of any point of $P$ to their closest centers $(r)$ is minimized.

Doubling metrics are metric spaces in which for any $r$, a ball of radius $r$ can be covered using a constant number of balls of radius $r/2$. Fixed dimensional Euclidean spaces are doubling metrics. The lower bound on the approximation factor of $k$-center is $1.822$ in Euclidean spaces, however, $(1+\epsilon)$-approximation algorithms with exponential dependency on $\frac{1}{\epsilon}$ and $k$ exist.

For a given set of sets $P_1,\ldots,P_L$, a \textit{composable coreset} independently computes subsets $C_1\subset P_1, \ldots, C_L\subset P_L$, such that $\cup_{i=1}^L C_i$ contains an approximation of a measure of the set $\cup_{i=1}^L P_i$.

We introduce a $(1+\epsilon)$-approximation composable coreset for $k$-center, which in doubling metrics has size sublinear in $|P|$.
This results in a $(2+\epsilon)$-approximation algorithm for $k$-center in MapReduce with a constant number of rounds in doubling metrics for any $\epsilon>0$ and sublinear communications, which is based on parametric pruning.

We prove the exponential nature of the trade-off between the number of centers $(k)$ and the radius $(r)$, and give a composable coreset for a related problem called dual clustering. Also, we give a new version of the parametric pruning algorithm with $O(\frac{nk}{\epsilon})$ running time, $O(n)$ space and $2+\epsilon$ approximation factor for metric $k$-center.
\end{abstract}

\section{Introduction}
Coresets are subsets of points that approximate a measure of the point set.
A method of computing coresets on big data sets is composable coresets.
Composable coresets \cite{composable} provide a framework for adapting constant factor approximation algorithms to streaming and MapReduce models.
Composable coresets summarize distributed data so that the scalability is increased while keeping the desirable approximation factor and time complexity.

There is a general algorithm for solving problems using coresets which known by different names in different settings: mergeable summaries \cite{mergeable} and merging in a tree-like structure \cite{logarithmic} for streaming $(1+\epsilon)$-approximation algorithms, small space (divide and conquer) for constant factor approximations in streaming \cite{guha}, and composable coresets in MapReduce \cite{composable}. A consequence of using constant factor approximations instead of $(1+\epsilon)$-approximations with the same merging method is that it can add a $O(\log n)$ factor to the approximation factor of the algorithm on an input of size $n$.

Composable coresets \cite{composable} require only a single round and sublinear communications in the MapReduce model, and the partitioning is done arbitrarily.
\begin{definition}[Composable Coreset]
A composable coreset on a set of sets $\{S_i\}_{i=1}^L$ is a set of subsets
$C(S_i) \subset S_i$ whose union gives an approximation solution for
an objective function $f:(\cup_{i=1}^L S_i)\rightarrow \mathbf{R}$.
Formally, a composable coreset of a minimization problem is an $\alpha$-approximation if
\[
f(\cup_i S_i) \leq f(\cup_i C(S_i) ) \leq \alpha.f(\cup_i S_i),
\]
for a minimization problem. The maximization version is similarly defined.
\end{definition}
A \textit{partitioned composable coreset} is a composable coreset in which the initial sets are a partitioning, i.e. sets $\{S_i\}_{i=1}^L$ are disjoint.
Using Gonzalez's algorithm for $k$-center \cite{gonzalez}, Indyk, et al. designed a composable coreset for a similar problem known as the diversity maximization problem \cite{composable}.
Other variations of composable coresets are randomized composable coresets and mapping coresets.
Randomized composable coresets \cite{randomizedcomposable} share the same divide and conquer approach as other composable coresets and differ from composable coresets only in the way they partition the data.
More specifically, randomized composable coresets, randomly partitioning the input, as opposed to other composable coresets which make use of arbitrary partitioning.
Mapping coresets \cite{mapping} extend composable coresets by adding a mapping between coreset points and other points to their coresets and keep almost the same amount of data in all machines. Algorithms for clustering in $\ell^p$ norms using mapping coresets are known \cite{mapping}.
Further improvements of composable coresets for diversity
maximization \cite{composable} include lower bounds \cite{mine} and multi-round composable coresets in metrics with bounded doubling dimension~\cite{ediversity}.

\textit{Metric $k$-center} is an NP-hard problem for which $2$-approximation algorithms that match the lower bound for the approximation factor of this problem are known~\cite{vazirani,gonzalez}.
Among approximation algorithms for $k$-center is a parametric pruning algorithm, based on the minimum dominating set~\cite{vazirani}. In this algorithm, an approximate dominating set is computed on the disk graph of the input points. The running time of the algorithm is $O(n^3)$.
The greedy algorithm for $k$-center requires only $O(nk)$ time \cite{gonzalez} and unlike the algorithm based on the minimum dominating set\cite{vazirani}, uses $r$-nets \cite{rnet}.
A $(1+\epsilon)$-approximation coreset exists for $k$-center \cite{ptas} with size exponentially dependent on $\frac{1}{\epsilon}$.

Let the optimal radius of $k$-center for a point set $P$ be $r$. The problem of finding the smallest set of points that cover $P$ using radius $r$ is known as the \textit{dual clustering problem} \cite{charikar}.

Metric dual clustering (of $k$-center) has an unbounded approximation factor \cite{charikar}. In Euclidean metric, there exists a streaming $O(2^d d \log d)$-approximation algorithm for this problem~\cite{charikar}.
Also, any $\alpha$-approximation algorithm for the minimum disk/ball cover problem gives a $2$-approximation coreset of size $\alpha k$ for $k$-center, so $2$-approximation coresets of size $(1+\epsilon)k$ exist for this problem \cite{mdc}.
A greedy algorithm for dual clustering of $k$-center has also been used as a preprocessing step of density-based clustering~(DBSCAN)~\cite{dbscan}. Implementing DBSCAN efficiently in MapReduce is an important problem~\cite{mrdbscan,mrdbscan2,mrdbscan3,mrdbscan4,mrdbscan5}.

Randomized algorithms for metric $k$-center and $k$-median in MapReduce \cite{ene2011fast} exist. These algorithms take $\alpha$-approximation offline algorithms and return $(4\alpha+2)$-approximation and $(10\alpha+3)$-approximation algorithms for $k$-center and $k$-median in MapReduce, respectively. The round complexity of these algorithms depends on the probability of the algorithm for finding a good approximation.

Current best results on metric $k$-center in MapReduce have $2$ rounds and give the approximation factor $4$ \cite{kcenter1}. However, a $2$-approximation algorithm exists if the cost of the optimal solution is known~\cite{brief}. Experiments in \cite{kcenter2} suggest that running Gonzalez's algorithm on a random partitioning and an arbitrary partitioning results in the same approximation factor.

In doubling metrics, a $(2+\epsilon)$-approximation algorithm exists that is based on Gonzalez's greedy algorithm \cite{other}. The version with outliers has also been discussed \cite{other,ding}.

\subsection*{Warm-Up}
Increasing the size of coresets in the first step of computing composable coresets can improve the approximation factor of some problems.
The approximation factor of $k$-median algorithm of \cite{guha} is $2c(1+2b)+2b$, where $b$ and $c$ are the approximation factors of $k$-median and weighted $k$-median, respectively. This algorithm computes a composable coreset, where a coreset for $k$-median is the set of $k$ medians weighted by the number of points assigned to each median.

A pseudo-approximation for $k$-median finds $k+O(1)$ median and has approximation factor $1+\sqrt{3}+\epsilon$ \cite{pseudo}.
Using a pseudo-approximation algorithm in place of $k$-median algorithms in the first step of \cite{guha}, it is possible to achieve a better approximation factor for $k$-median using the same proof as \cite{guha}. Since any pseudo-approximation has a cost less than or equal to the optimal solution; replacing them will not increase the cost of clustering.

The approximation factor using \cite{weighted} as weighted $k$-median coresets is $91.66$, while the best $k$-median algorithm would give a $99.33$ factor using the same algorithm ($b=1+\sqrt{3}$). The lower bound on the approximation factor of this algorithm using the same weighted $k$-median algorithm but without pseudo-approximation is $63.09$ ($b=1+\frac{2}{e}$).

\subsection*{Contributions}
We give a $(1+\epsilon)$-approximation coreset of size $(\frac{4}{\epsilon})^{1+2b}k$ for $k$-center in metric spaces with doubling dimension $b$.
Using composable coresets, our algorithm generalizes to MapReduce setting, where it becomes a $(1+\epsilon)$-approximation coreset of size $(\frac{4}{\epsilon})^{1+2b}\frac{n}{m}k$, given memory $m$, which is sublinear in the input size $n$.

\begin{table}[h]
\centering
\begin{tabular}{|l|c|l|}
\hline
Conditions & Approx. & Reference\\
\hline
Metric $k$-center: &&\\
$O(1)$-rounds & $4$ & \cite{kcenter1} (greedy), \Cref{theorem:gkcenter} (parametric pruning)\\
$O(\log_{1+\epsilon}^{\Delta})$ rounds & $2+\epsilon$ & \cite{brief} (parametric pruning)\\
Lower bound & $2$ & offline \cite{vazirani}\\
\hline
Doubling metrics: &&\\
$O(1)$-rounds & $2+\epsilon$ & \cite{other} (greedy), \Cref{theorem:kcenter}~(parametric pruning)\\
Lower bound & $1.822$ & \cite{lowerbound}\\
\hline
\hline
Dual clustering: &&\\
General metrics & $O(\log n)$ & min dominating set \cite{vazirani}, composable coreset \cite{composable}\\
Doubling metrics & $O(1)$ & \Cref{theorem:dual}\\
\hline
\end{tabular}
\caption{Summary of results for $k$-center and dual clustering in MapReduce. $\Delta$ is the diameter of the point-set.}
\label{table:kcenter}
\end{table}

Using the composable coreset for dual clustering, we find a $(2+\epsilon)$-approximation composable coreset for $k$-center, which has a sublinear size in metric spaces with constant doubling dimension. More specifically, if an $\alpha$-approximation exists for doubling metrics, our algorithm provides $(\alpha+\epsilon)$-approximation factor.
It empirically improves previous metric $k$-center algorithms \cite{kcenter1,kcenter2} in MapReduce.
A summary of results on $k$-center is shown in \Cref{table:kcenter}. Note that for the MapReduce model, each round can take a polynomial amount of time, however, the space available to each machine is sublinear.

Our algorithm achieves a trade-off between the approximation factor and the size of coreset (see \cref{fig:plot}). The approximation factor of our algorithm and the size of the resulting composable coreset for $L$ input sets are $\alpha=2+\epsilon$ and $kL\beta$, respectively. This trade-off is the main idea of our algorithm.

\begin{figure}[h]
\centering
\includegraphics[scale=0.8]{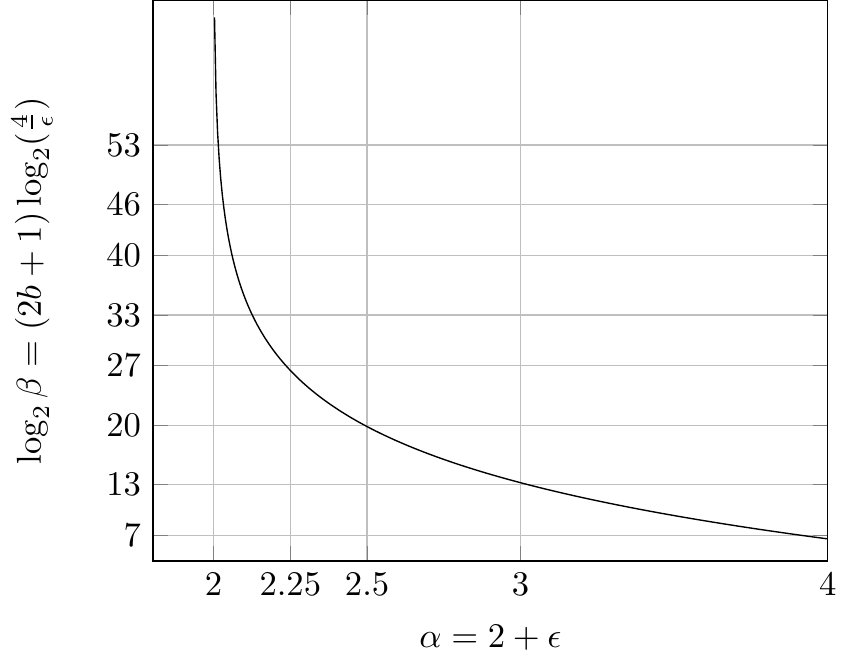}
\caption{Space-approximation factor trade-off of our $\alpha$-approx. coreset of size $\beta kL$ for $k$-center in Euclidean plane.}
\label{fig:plot}
\end{figure}

Our composable coresets give single-pass streaming algorithms and $1$-round approximation algorithms in MapReduce with sublinear communication, since each coreset is communicated once, and the size of the coreset is constant.

\section{Preliminaries}
First, we review some basic definitions, models and algorithms in computational geometry and MapReduce.

\subsection{Definitions}
Some geometric definitions and notations are reviewed here, which have been used in the rest of the paper.
\begin{definition}[Metric Space]
A (possibly infinite) set of points $P$ and a distance function $d(.,.)$ create a \textit{metric space} if the following three conditions hold:
\begin{itemize}
\item $\forall p,q\in P \quad d(p,q)=0 \Leftrightarrow p=q$
\item $\forall p,q\in P \quad d(p,q) = d(q,p)$
\item $\forall p,q,t \in P \quad d(p,q)+d(q,t) \geq d(p,t)$, known as triangle inequality
\end{itemize}
\end{definition}

Metrics with bounded doubling dimension are called \textit{doubling metrics}. Constant dimension Euclidean spaces under $\ell^p$ norms and Manhattan distance are examples of doubling metrics.

Doubling constant \cite{doublingdef} of a metric space is the number of balls of radius $r$ that lie inside a ball of radius $2r$. The logarithm of doubling constant in base $2$ is called \textit{doubling dimension}. Many algorithms have better approximation factors in doubling metrics compared to general metric spaces. The doubling dimension of Euclidean plane is $\log_2 7$.
\begin{definition}[Doubling Dimension \cite{doublingdef}]\label{def:doubling}
For any point $x$ in a metric space and any $r\geq 0$, if the ball of radius $2r$
centered at $x$ can be covered with at most $2^b$ balls of radius $r$, we say
the doubling dimension of the metric space is $b$.
\end{definition}

$k$-Center is an NP-hard clustering problem with clusters in shapes of $d$-dimensional balls.
\begin{definition}[Metric $k$-Center \cite{vazirani}]
Given a set $P$ of points in a metric space, find a subset of $k$ points as cluster centers $C$ such that
\[
\forall p\in P, \min_{c\in C} d(p,c) \leq r
\]
and $r$ is minimized.
\end{definition}
The best possible approximation factor of metric $k$-center is $2$ \cite{vazirani}.

Geometric intersection graphs represent intersections between a set of shapes. For a set of disks, their intersection graph is called a disk graph.
\begin{definition}[Disk Graph]
For a set of points $P$ in a metric space with distance function $d(.,.)$ and a radius $r$, the disk graph of $P$ is a graph whose vertices are $P$, and whose edges connect points with distance at most $2r$.
\end{definition}

\begin{definition}[Dominating Set]
Given a graph $G=(V,E)$, the smallest subset $Q\subset V$ is a minimum dominating set, if $\forall v\in V, v\in Q \vee \exists u\in Q : (v,u)\in E$.
\end{definition}

We define the following problem as a generalization of the dual clustering of \cite{charikar} by removing the following two conditions: the radius of balls is $1$, and the set of points are in $\mathbf{R}^d$.
\begin{definition}[Dual Clustering]
Given a set of points $P$ and a radius $r$, the dual clustering problem finds the smallest subset of points as centers $(C), C\subset P$ such that the distance from each point to its closest center is at most $r$.
\end{definition}
\subsection{An Approximation Algorithm for Metric $k$-Center}
Here, we review the parametric pruning algorithm of~\cite{vazirani} for metric $k$-center.

\begin{algorithm}[H]
\caption{Parametric Pruning for $k$-Center \cite{vazirani}}
\label{alg:parametric}
\begin{algorithmic}
\Require{A metric graph $G=(V,E)$, an integer $k$}
\Ensure{A subset $C\subset V, |C|\leq k$}
\State{Sort $E$ such that $e_1 \leq e_2 \leq \cdots \leq e_{|E|}$.}
\State{$G'=(V,E')\gets (V,\emptyset)$}
\For{$i=1,\ldots,|E|$}
	\State{$E'\gets E'\cup \{e_i\}$}
	\State{Run \cref{alg:cbc} on $G'$.}
	\If{$|IS|\leq k$}
		\Return {$IS$}
	\EndIf
\EndFor
\end{algorithmic}
\end{algorithm}

Using this algorithm on a metric graph $G$, a $2$-approximation for the optimal radius $r$ can be determined.
In \cref{alg:parametric}, edges are added by increasing order of their length until reaching $r$.
Given this radius, another graph $(G')$ is built, where edges exist between points within distance at most $r$ of each other.

\begin{algorithm}[H]
\caption{Approximate dominating set of $G$ \cite{vazirani}}
\label{alg:cbc}
\begin{algorithmic}
\Require{A metric graph $G'=(V,E)$}
\Ensure{A subset $C\subset V$}
\State{$G'^2 \gets G'$}
	\For{$\forall (u,t),(t,v) \in E$}
	\State{Add $(u,v)$ to $G'^2$.}
	\EndFor
	\State{Find a maximal independent set $IS$ of $G'^2$}\\
\Return $IS$
\end{algorithmic}
\end{algorithm}

Hence, by definition, a minimum dominating set of $G'$ is an optimal $k$-center of $G$.
Every cluster is a star in $G'$ which turns into a clique in $G'^2$. Therefore, a maximal independent set of $G'^2$ chooses at most one point from each cluster. \Cref{alg:cbc} computes $G'^2$ and returns a maximal independent set of $G'^2$.

Computing a maximal independent set takes $O(|E|)$ time. The graph $G'^2$ in \Cref{alg:cbc} only changes in each iteration of \Cref{alg:parametric} around the newly added edge, so, updating the previous graph and $IS$ takes $O(n)$ time. Therefore, the time complexity of \Cref{alg:parametric} is $O(|E|\cdot n)=O(n^3)$.
\section{A Coreset for Dual Clustering in Doubling Metrics}
In this section, we prove a better approximation offline coreset for the dual clustering problem. Our method is based on \Cref{alg:parametric} which first builds the disk graph with radius $r$, then covers this graph using a set of stars. We prove the maximum degree of those stars is $D^2$, where $D$ is the doubling constant. The result is an approximation algorithm for dual clustering in doubling metrics.
\subsection{Algorithm}
We add a preprocessing step to \Cref{alg:parametric} to find a better approximation factor for $k$-center and dual clustering problems.
\begin{algorithm}[H]
\caption{A Coreset for $k$-Center}
\label{alg:offline}
\begin{algorithmic}
\Require{A set of points $P$, an integer $k$ or a radius $r$}
\Ensure{A subset $C\subset P, |C|\leq k$}
\If{$k$ is given in the input}
\State{Compute a $2$-approximation solution for $k$-center (radius $r$).}
\EndIf
\State{$E\gets$ all pairs of points with distance at most $r/2$.}
\State{Run \cref{alg:cbc} on $G=(P,E)$ to compute $IS$.}\\
\Return {$IS$}
\end{algorithmic}
\end{algorithm}

\subsection{Analysis}
Unlike in general metric spaces, $k$-center in doubling metrics admits a space-approximation factor trade-off.
More specifically, doubling or halving the radius of $k$-center changes the number of points in the coreset by a constant factor since the degrees of vertices in the minimum dominating set are bounded in those metric spaces.

\begin{lemma}\label{lemma:kissing0}
For each cluster $C_i$ of \Cref{alg:offline} with radius $r'$, the maximum number of points $(\Delta+1)$ from $C_i$ that are required to cover all points inside $C_i$ with radius $r'/2$ is at most $D^2$, i.e.
\[
(\Delta+1) \leq D^2,
\]
where $D$ is the doubling constant of the metric space.
\end{lemma}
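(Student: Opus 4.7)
The plan is to apply Definition~\ref{def:doubling} twice to the enclosing ball of $C_i$ and then ``snap'' the resulting covering balls to actual points of $C_i$. Let $c$ denote the cluster center, so that $C_i \subseteq B(c, r')$.

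First, I would use the doubling property once to cover $B(c, r')$ by at most $D$ balls of radius $r'/2$. The problem is that these balls have arbitrary centers in the ambient metric space, whereas the lemma requires covering balls of radius $r'/2$ whose centers lie in $C_i$. To close that gap, I would apply the doubling property a second time to each of the $D$ balls, obtaining a cover of $B(c, r')$ by at most $D^2$ balls of radius $r'/4$, say $B(x_1, r'/4), \ldots, B(x_m, r'/4)$ with $m \le D^2$.

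The next step is the snap argument. Any $B(x_j, r'/4)$ containing no point of $C_i$ can be discarded without affecting the cover of $C_i$; for each surviving ball, pick an arbitrary $p_j \in C_i \cap B(x_j, r'/4)$. A one-line triangle inequality then gives $B(x_j, r'/4) \subseteq B(p_j, r'/2)$: for $q \in B(x_j, r'/4)$, we have $d(p_j, q) \le d(p_j, x_j) + d(x_j, q) \le r'/4 + r'/4 = r'/2$. Replacing each $B(x_j, r'/4)$ by $B(p_j, r'/2)$ yields at most $D^2$ balls of radius $r'/2$ centered at points of $C_i$ that together cover $C_i$, which is exactly the desired bound $\Delta + 1 \le D^2$.

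The main subtlety to watch out for is precisely this mismatch between the ``arbitrary center'' guarantee of the doubling definition and the ``center from $C_i$'' requirement of the lemma. A single halving gives only $D$ balls, but with arbitrary centers; the second halving spends a factor of $D$ in the count in order to create the slack needed to snap each covering center into $C_i$ while keeping the final radius at $r'/2$. This is why the bound is $D^2$ rather than $D$, and this factor-of-$D$ overhead is what will eventually drive the $(\tfrac{4}{\epsilon})^{1+2b}$ coreset size in the main theorem.
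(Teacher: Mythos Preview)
Your proposal is correct and follows essentially the same approach as the paper: apply the doubling property twice to obtain at most $D^2$ balls of radius $r'/4$, then snap each nonempty ball to a point of $C_i$ via the triangle inequality to recover a cover by balls of radius $r'/2$ centered in $C_i$. Your write-up is in fact slightly cleaner than the paper's, which appends a redundant minimality argument but does not explicitly mention discarding the empty $r'/4$-balls.
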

\begin{proof}
Assume a point $p\in IS$ returned by \Cref{alg:offline}.
By the definition of doubling metrics, there are $D$ balls of radius $r'/2$ centered at $b_1,\ldots,b_{D}$ called $B_1,\ldots,B_D$ that cover the ball of radius $r'$ centered at $p$, called $B$.
\[
\forall q\in B, \exists B_i, i=1,\ldots,D : d(p,b_i)\leq r'/2
\]
Repeating this process for each ball $B_i$ results in a set of at most $D$ balls $(B'_{i,1},\ldots,B'_{i,D})$ of radius $r'/4$ centered at $b'_{i,1},\ldots,b'_{i,D}$.
\[
\forall q\in B'_{i,j}, d(b'_{i,j},q)\leq r'/4
\]
Choose a point $p_{i,j}\in P\cap B'_{i,j}$. Using triangle inequality,
\begin{align*}
\forall q\in B'_{i,j}, d(p_{i,j},q) &\leq d(p_{i,j},b'_{i,j})+d(b'_{i,j},q)\\
&\leq r'/4+r'/4=r'/2.\*
\end{align*}

We claim any minimal solution needs at most one point from each ball $B'_{i,j}$. By contradiction, assume there are two point $p_{i,j},q'$ in the minimal solution that lie inside a ball $B'_{i,j}$. After removing $q'$, the ball with radius $r'/2$ centered at $p_{i,j}$ still covers $B'_{i,j}$, since:
\begin{align*}
\forall q\in P, \exists B_i ,B'_{i,j}\ni q,p_{i,j}&\\
d(q,p_{i,j}) &\leq d(q,b'_{i,j})+d(b'_{i,j},p_{i,j})\\
&\leq r'/4+r'/4=r/2'.\*
\end{align*}
Then we have found a point $(q')$ whose removal decreases the size of the solution, which means the solution was not minimal. So, the size of any minimal set of points covering $B$ is at most $D^2$.
\end{proof}

\begin{figure}[h]
\centering
\includegraphics[scale=0.7]{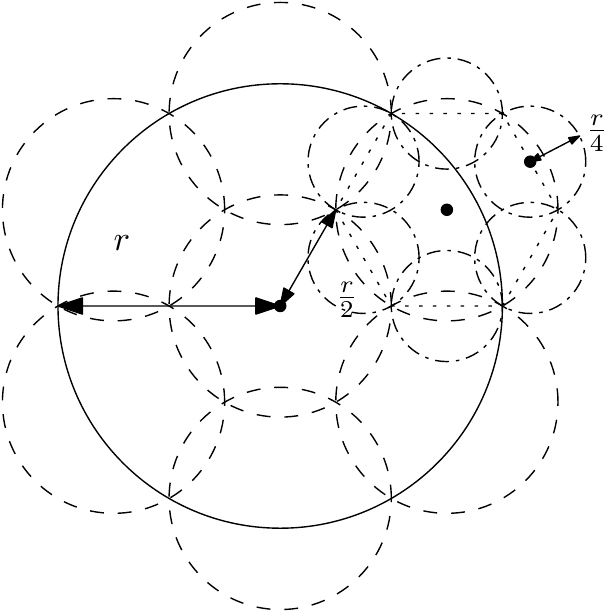}
\caption{Applying the doubling dimension bound twice (Lemma~\ref{lemma:kissing0}).}
\label{fig:kissing}
\end{figure}

\begin{lemma}\label{lemma:kissing}
In a metric space with doubling constant $D$, if a dual clustering with radius $r$ has $k$ points, then a dual clustering with radius $r/2$ exists which has $D^2 k$ points.
\end{lemma}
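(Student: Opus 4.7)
The plan is to apply Lemma~\ref{lemma:kissing0} cluster-by-cluster to the given dual clustering and take the union of the resulting sub-covers. First, I would fix a dual clustering of radius $r$ with $k$ centers $c_1,\dots,c_k$ and let $C_1,\dots,C_k$ be the associated clusters, so that $C_i \subseteq B(c_i,r)$ and $\bigcup_i C_i = P$. The key observation is that each $C_i$ is a set of points lying inside a ball of radius $r$, which is precisely the setting to which Lemma~\ref{lemma:kissing0} (with $r' = r$) applies.

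Next, I would invoke Lemma~\ref{lemma:kissing0} on each cluster $C_i$ to obtain a set $S_i \subseteq C_i$ of size at most $D^2$ whose $r/2$-balls cover $C_i$. Taking $S = \bigcup_{i=1}^k S_i$, every point of $P$ lies within distance $r/2$ of some point of $S$, so $S$ is a feasible dual clustering of radius $r/2$. The size bound is immediate by the union bound: $|S| \leq \sum_{i=1}^k |S_i| \leq D^2 k$.

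The only subtlety, and thus the main thing to check carefully, is that Lemma~\ref{lemma:kissing0} was stated in terms of clusters produced by \Cref{alg:offline} rather than for arbitrary sets enclosed in a ball of radius $r'$. However, inspecting that lemma's proof shows it only uses that the cluster lies in a ball $B$ of radius $r'$ centered at some point, together with the doubling property applied twice; it does not actually rely on $p$ being in the independent set $IS$. So the statement transfers verbatim to any cluster of radius $r'$, which is exactly what is needed here. Once this is noted, the current lemma follows by straightforward summation over the $k$ clusters.
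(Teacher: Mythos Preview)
Your proposal is correct and follows essentially the same approach as the paper: apply Lemma~\ref{lemma:kissing0} to each of the $k$ clusters to obtain at most $D^2$ covering points per cluster, then take the union to get at most $D^2 k$ centers of radius $r/2$. Your explicit remark that the proof of Lemma~\ref{lemma:kissing0} only uses containment in a ball of radius $r'$ (and not that the cluster comes from \Cref{alg:offline}) is exactly the justification the paper invokes implicitly when it writes ``Based on the proof of Lemma~\ref{lemma:kissing0}.''
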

\begin{proof}
Let $p$ be a center in the $k$-center problem. Based on the proof of Lemma~\ref{lemma:kissing0}, there are $\Delta$ vertices adjacent to $p$ that cover the points inside the ball of radius $r$ centered at $p$, using balls of radius $r/2$ and a ball of radius $r/2$ centered at $p$. By choosing all these vertices as centers, it is possible to cover all input points $P$ with radius $r/2$. Using the same reasoning for all clusters, it is possible to cover all points using $(\Delta+1) k$ centers. Using the bound in Lemma~\ref{lemma:kissing0}, these are $D^2 k$ centers.
\end{proof}

\begin{theorem}\label{theorem:dual}
The approximation factor of \Cref{alg:offline} is $D^2$ for the dual clustering.
\end{theorem}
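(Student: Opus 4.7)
The plan is to establish two things about the set $IS$ returned by \Cref{alg:offline}: that it is a feasible dual clustering of radius $r$, and that its size is at most $D^2$ times the optimum.

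First I would verify feasibility. By construction $IS$ is a maximal independent set of $G^2$, where $G=(P,E)$ has $E$ equal to the pairs at distance at most $r/2$. Maximality means every $q\in P\setminus IS$ has a $G^2$-neighbor $u\in IS$, which by the construction of $G^2$ in \Cref{alg:cbc} yields either $d(q,u)\leq r/2$ or the existence of a common $G$-neighbor $t$ with $d(q,t),d(t,u)\leq r/2$. The triangle inequality then gives $d(q,u)\leq r$ in either case, so $IS$ covers $P$ at radius $r$.

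Next I would bound $|IS|$. Let $C^*$ be an optimal dual clustering of $P$ at radius $r$, of size $k^*$. By \Cref{lemma:kissing} there is a dual clustering $C'$ at radius $r/2$ of size at most $D^2 k^*$. I would partition $P$ by assigning each point to a closest center in $C'$, yielding parts $\{P_{c'}\}_{c'\in C'}$. If two distinct points $u,v\in IS$ both lie in some $P_{c'}$, then $d(u,c'),d(v,c')\leq r/2$ makes $c'$ a common $G$-neighbor of $u$ and $v$, placing $(u,v)\in E(G^2)$ and contradicting the independent set property. Hence each part contributes at most one element of $IS$, giving $|IS|\leq |C'|\leq D^2 k^*$.

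The main subtlety I expect to handle carefully is that the pigeonhole step requires the centers of $C'$ to lie in $P$, since only then are they actual vertices of $G$ that can serve as common $G$-neighbors. This is satisfied because the construction underlying \Cref{lemma:kissing} and \Cref{lemma:kissing0} explicitly chooses each covering center from $P\cap B'_{i,j}$, so the $r/2$-clustering invoked from \Cref{lemma:kissing} has its centers in $P$. Once this observation is in place the two displayed inequalities combine to give the claimed $D^2$ approximation factor.
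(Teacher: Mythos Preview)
Your proof is correct and follows the same route as the paper: both invoke \Cref{lemma:kissing} to pass from an optimal $r$-clustering of size $k^*$ to an $r/2$-clustering of size $D^2 k^*$, and then read off the $D^2$ approximation ratio. Your write-up is in fact more complete than the paper's one-line proof, since you explicitly verify feasibility of $IS$ at radius $r$ via the triangle inequality and supply the pigeonhole argument (at most one point of a $G^2$-independent set per $r/2$-ball centered in $P$) that the paper leaves implicit; your observation that the centers produced by \Cref{lemma:kissing0} lie in $P$, so they are genuine vertices of $G$ and can serve as the common neighbor $t$, is exactly the detail needed to make that pigeonhole step go through.
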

\begin{proof}
Since the radius of balls in Lemma~\ref{lemma:kissing} is at most the optimal radius for $k$-center, the approximation factor of dual clustering is the number of points chosen as centers divided by $k$, which is $D^2$.
\end{proof}

\begin{theorem}\label{theorem:krcenter}
The approximation factor of the coreset for $k$-center in \Cref{alg:offline} is $2^{-R}$ and its size is $D^{2(R+1)}k$.
\end{theorem}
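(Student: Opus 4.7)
The plan is to iterate the halving argument of \Cref{lemma:kissing} exactly $R+1$ times, and to adapt \Cref{alg:offline} so that the threshold defining the edge set $E$ is $r/2^{R+1}$ in place of $r/2$, where $r$ is the $2$-approximate $k$-center radius. The coreset $C$ is then the maximal independent set returned by \Cref{alg:cbc} on the resulting disk graph $G$.

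For the size bound $|C|\leq D^{2(R+1)}k$, I would begin with the observation that the $k$ balls of the optimal $k$-center solution form a dual clustering of $P$ with $k$ centers and radius $\mathrm{OPT}\leq r$. Applying \Cref{lemma:kissing} inductively $R+1$ times yields a dual clustering at radius $\mathrm{OPT}/2^{R+1}$ using at most $D^{2(R+1)}k$ centers, which in particular witnesses a dual clustering at the weakly larger radius $r/2^{R+1}$. It then remains to mimic \Cref{theorem:dual}: two points of $C$ are non-adjacent in $G$, so their distance exceeds $r/2^{R+1}$, and if they were placed in a common ball of radius $r/2^{R+1}$ centered at some $q\in P$, the path $u$--$q$--$v$ would be an edge of $G^2$ contradicting independence; hence each ball contains at most one coreset point, and $|C|$ is bounded by the dual clustering size.

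For the approximation factor, I would exploit that maximality of $C$ in $G^2$ places every $p\in P$ within two $G$-edges of some $c\in C$, giving $d(p,c)\leq 2\cdot r/2^{R+1}=r/2^R$. Using this coverage property, the optimal $k$-center solution computed on $C$ can be lifted to $P$ via the triangle inequality, and its radius on $P$ exceeds the $k$-center optimum of $P$ by at most a constant multiple of $r/2^R$. Since $r\leq 2\cdot \mathrm{OPT}$, this additive error translates to an $O(2^{-R})$ multiplicative overhead over $\mathrm{OPT}$, matching the claimed factor.

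The main obstacle I foresee is the precise bookkeeping between the radius $r/2^{R+1}$ that drives the algorithm and the true optimal radius appearing in \Cref{lemma:kissing}: the factor of $2$ lost to the initial $2$-approximation must be absorbed by the halving iterations without off-by-one errors in either the exponent $R+1$ of the size bound or the exponent $R$ of the approximation factor, which amounts to correctly choosing whether to invoke \Cref{lemma:kissing} on the optimal radius or on $r$ itself at each step.
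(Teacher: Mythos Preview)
Your proposal is essentially the same argument as the paper's, only spelled out in considerably more detail. The paper's proof is three sentences: it says to apply \Cref{lemma:kissing} repeatedly, notes that each application halves the radius and multiplies the count by $D^2$, and handles the off-by-one between $R$ and $R+1$ by a case split on whether the initial $2$-approximation actually returned $\mathrm{OPT}$ or $2\cdot\mathrm{OPT}$. You arrive at the same bounds by iterating $R+1$ times from $\mathrm{OPT}$ and then absorbing the slack $\mathrm{OPT}\leq r$ when passing to the algorithm's threshold $r/2^{R+1}$; the pigeonhole argument you give for $|C|\leq D^{2(R+1)}k$ and the $G^2$-maximality argument for the coverage radius $r/2^R$ are the details the paper leaves implicit. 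The obstacle you flag---the bookkeeping between the $2$-approximate $r$ and $\mathrm{OPT}$---is exactly the point the paper addresses with its case split, so you have correctly identified the only subtlety.
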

\begin{proof}
Applying Lemma~\ref{lemma:kissing} halves the radius and multiplies the number of points by $D^2$. So, applying this lemma $R$ times gives $(D^2)^{R+1}k$ points since it might be the case that in the first step of the algorithm the optimal radius was found, and we divided it by $2$. The radius remains $\frac{r}{2^R}$ because of the case where we had found a $2$-approximation.
\end{proof}

\begin{theorem}\label{theorem:kcenter0}
\Cref{alg:offline} given $(\frac{4}{\epsilon})^{2\log_2 D}k$ as input, is a $(1+\epsilon)$-approximation coreset of size $(\frac{4}{\epsilon})^{2\log_2 D}k$ for the $k$-center problem.
\end{theorem}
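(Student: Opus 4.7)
The plan is to combine \Cref{theorem:krcenter} with a one-line triangle-inequality argument. First I would fix the iteration parameter $R$ so that the size guarantee $D^{2(R+1)}k$ from \Cref{theorem:krcenter} matches the claimed target $(4/\epsilon)^{2\log_2 D}k$. Writing $D^{2(R+1)} = 2^{2(R+1)\log_2 D}$ and $(4/\epsilon)^{2\log_2 D} = 2^{2\log_2(4/\epsilon)\log_2 D}$, the two expressions agree exactly when $R+1 = \log_2(4/\epsilon)$, i.e.\ $R = \log_2(2/\epsilon)$.

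With this choice of $R$, \Cref{theorem:krcenter} yields an $IS \subseteq P$ of the claimed size such that every point of $P$ lies within distance at most $r/2^R$ of some point of $IS$, where $r \le 2r^{*}$ is the radius returned by the initial $2$-approximation and $r^{*}$ is the optimum $k$-center cost on $P$. Substituting $2^R = 2/\epsilon$ gives the covering bound $r/2^R \le 2r^{*}/2^R = \epsilon\,r^{*}$.

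Next I would translate this covering property into a $(1+\epsilon)$-approximation for $k$-center on $P$. Let $C^{\star}_{IS}$ be an optimal $k$-center solution of $IS$ with radius $r_{IS}$. Since $IS\subseteq P$, the optimal centers for $P$ themselves form a feasible $k$-center of $IS$, so $r_{IS}\le r^{*}$. For any $p\in P$, I would pick $s\in IS$ with $d(p,s)\le\epsilon\,r^{*}$ and $c\in C^{\star}_{IS}$ with $d(s,c)\le r_{IS}\le r^{*}$; the triangle inequality then gives $d(p,c)\le(1+\epsilon)r^{*}$, which is the desired approximation.

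The main obstacle I anticipate is purely bookkeeping: keeping the ``$+1$'' in the exponent $R+1$ straight. That extra factor of $D^2$ comes precisely from the possibility noted in the proof of \Cref{theorem:krcenter} that the initial $2$-approximation may already return $r^{*}$ exactly, so one halving is effectively ``spent'' going from $r$ to $r^{*}$ rather than actually beating $r^{*}$. Once this accounting is pinned down, the rest reduces to the elementary triangle-inequality chain above.
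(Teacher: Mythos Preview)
Your plan is essentially the paper's own proof: pick $R=\lceil\log_2(2/\epsilon)\rceil$, invoke \Cref{theorem:krcenter} to get the size $(\tfrac{4}{\epsilon})^{2\log_2 D}k$ and covering radius $\epsilon r^{*}$, and finish with a single triangle inequality. The only difference is the order of the triangle-inequality chain. The paper goes $p\to o\to c$ (point $\to$ its optimal center in $P$ $\to$ the coreset point covering that center), obtaining $d(p,c)\le r^{*}+\epsilon r^{*}$ directly. You instead go $p\to s\to c$ (point $\to$ its coreset representative $\to$ an optimal center of $IS$).

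That reversal introduces one small wrinkle: your claim $r_{IS}\le r^{*}$ relies on ``the optimal centers for $P$ themselves form a feasible $k$-center of $IS$.'' Under the paper's definition of metric $k$-center the centers must come from the point set being clustered, so the optimal centers of $P$ need not lie in $IS$ and are not, strictly speaking, a feasible solution for $k$-center on $IS$; a priori one only gets $r_{IS}\le 2r^{*}$ (\Cref{lemma:kcenter}) or, using the $\epsilon r^{*}$-covering, $r_{IS}\le(1+\epsilon)r^{*}$. Either switch to the paper's $p\to o\to c$ decomposition (which sidesteps $r_{IS}$ entirely), or note that the coreset representatives of the optimal centers of $P$ are $k$ points of $IS$ covering $IS$ with radius $(1+\epsilon)r^{*}$, and absorb the extra $\epsilon$ by rescaling. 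With that adjustment your argument and the paper's coincide.
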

\begin{proof}
For $R=\lceil \log_2 \frac{2}{\epsilon} \rceil$, the proof of \Cref{theorem:krcenter} gives $(\frac{4}{\epsilon})^{2\log_2 D}$ points and radius $r\epsilon$.
Assume $O$ is the set of $k$ centers returned by the optimal algorithm for point-set $P$, and $C$ is the set of centers returned by running the optimal algorithm on the coreset of $P$. For any point $p\in P$, let $o$ be the center that covers $p$ and $c$ be the point that represents $o$ in the coreset.
Using triangle inequality:
\[
d(p,c) \leq d(p,o)+d(o,c) \leq r+r\epsilon=(1+\epsilon)r
\]
So, computing a $k$-center on this coreset gives a $(1+\epsilon)$-approximation.
\end{proof}
\section{A Composable Core-Set for $k$-Center in Doubling Metrics}
Our general algorithm for constructing coresets based on dual clustering has the following steps:
\begin{itemize}
\item Compute the cost of an approximate solution $(X)$.
\item Find a composable coreset for dual clustering with cost $X$.
\item Compute a clustering on the coreset.
\end{itemize}
In this section, we use this general algorithm for solving $k$-center.
\subsection{Algorithm}
Knowing the exact or approximate value of $r$, we can find a single-round $(2+\epsilon)$-approximation for metric $k$-center in MapReduce.
Although the algorithm achieves the aforementioned approximation factor, the size of the coreset and the communication complexity of the algorithm depend highly on the doubling dimension.

\begin{algorithm}[h]
\caption{$k$-Center}
\label{alg:kcenter}
\begin{algorithmic}[1]
\Require{A set of sets of points $\cup_{i=1}^L S_i$, a $k$-center algorithm}
\Ensure{A set of $k$ centers}
\State{Run a $k$-center algorithm on each $S_i$ to find the radius $r_i$.}
\State{Run \Cref{alg:cbc} on the disk graph of each set $S_i$ with radius $\frac{\epsilon r_i}{2}$ locally to find $C(S_i)$.}
\State{Send $C(S_i)$ to set $1$ to find the union $\cup_i C(S_i)$.}
\State{Run a $2$-approximation $k$-center algorithm on $\cup_{i=1}^L C(S_i)$ to find the set of centers $C$.}\\
\Return{$C$.}
\end{algorithmic}
\end{algorithm}

Based on the running time of \Cref{alg:cbc} and Gonzalez's algorithm, the running time of \Cref{alg:kcenter} is $\sum_i [O(k\cdot |S_i|)+O(|S_i|^2)]+O(k\sum_i |C(S_i)|)=O(kn)$. Since the sum of running times of machines is of the same order as the best sequential algorithm, \Cref{alg:kcenter} is a work-efficient parallel algorithm.

\begin{figure}[h]
\centering
\includegraphics[scale=0.8]{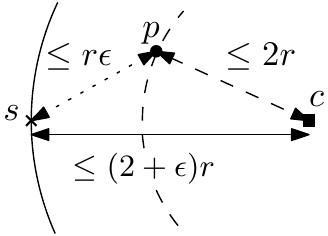}
\caption{The dominating set on $\cup_i C(S_i)$ covers $\cup_i S_i$ with radius $(2+\epsilon)$(Theorem \ref{theorem:kcenter}).}
\label{fig:kcenter}
\end{figure}

We review the following well-known lemma:
\begin{lemma}\label{lemma:kcenter}
For a subset $S\subset P$, the optimal radius of the $k$-center of $S$ is at most twice the radius of the $k$-center of $P$. 
\end{lemma}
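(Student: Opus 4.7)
The plan is to lift the optimal $k$-center solution for $P$ to a feasible (not necessarily optimal) $k$-center solution for $S$, and to control the radius via the triangle inequality. Let $c_1,\ldots,c_k \in P$ be an optimal set of centers for $P$ with optimal radius $r^*$, and let $C_1,\ldots,C_k$ be the corresponding optimal clusters, so that every $p \in C_i$ satisfies $d(p,c_i)\leq r^*$.

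First, I would partition $S$ by intersecting with the optimal clusters of $P$: for each index $i$, consider $S \cap C_i$. For each non-empty intersection, choose an arbitrary representative $s_i \in S \cap C_i$ and declare it a candidate center; for empty intersections, no center is needed. This produces at most $k$ candidate centers lying inside $S$, which is a feasible input to the $k$-center problem on $S$.

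Next, I would bound the radius of this feasible solution. For any $s \in S$, $s$ lies in some $S \cap C_i$ (which is therefore non-empty), so the representative $s_i$ is defined. By the triangle inequality,
\[
d(s,s_i) \;\leq\; d(s,c_i) + d(c_i,s_i) \;\leq\; r^* + r^* \;=\; 2r^*,
\]
since both $s$ and $s_i$ lie in the original optimal cluster $C_i$ of $P$. Hence the candidate centers cover $S$ with radius at most $2r^*$, and since the optimal $k$-center radius for $S$ is no larger than the radius achieved by any feasible $k$-center solution for $S$, the claim follows.

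There is no real obstacle here: the only subtlety is that the optimal centers $c_i$ for $P$ may fail to lie in $S$, which is why we must pass to a representative $s_i \in S \cap C_i$ and pay a factor of two through the triangle inequality. No assumption on the metric beyond the three metric axioms is used, so the lemma holds in arbitrary metric spaces.
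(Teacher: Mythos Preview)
Your proof is correct and follows essentially the same approach as the paper: intersect the optimal clusters of $P$ with $S$, pick one representative per non-empty intersection, and use the triangle inequality through the original center $c_i$ to bound the radius by $2r^*$. The only cosmetic difference is that the paper retains $c_i$ itself as the center whenever $c_i\in S$ (getting radius $r^*$ on that cluster), whereas you uniformly replace it by an arbitrary $s_i\in S\cap C_i$; this does not affect the bound or the argument.
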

\begin{proof}
Consider the set of clusters $O_i$ in the optimal $k$-center of $P$ centered at $c_i,i=1,\ldots,k$ with radius $r$. If $c_i\in S$, then the points of $O_i\cap S$ are covered by $c_i$ with radius $r$, as before. Otherwise, select an arbitrary point in $O_i\cap S$ as the new center $c'_i$. Using the triangle inequality on $c_i,c'_i$ and any point $p\in O_i\cap S$:
\[
d(p,c'_i) \leq d(p,c_i)+d(c_i,c'_i) \leq r+r=2r
\]
Since $c'_i$ was covered using $c_i$ with radius $r$. So, the set $S\cap O_i$ can be covered with radius $2r$. Note that since we choose at most one point from each set, the number of new centers is at most $k$.
\end{proof}

\begin{theorem}\label{theorem:kcenter}
The approximation factor of \Cref{alg:kcenter} is $2+\epsilon$ for metric $k$-center.
\end{theorem}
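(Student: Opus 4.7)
The plan is to exhibit a $k$-center solution on $\bigcup_i C(S_i)$ of radius at most $(1+O(\epsilon))r^*$, where $r^*$ denotes the optimal $k$-center radius of $\bigcup_i S_i$; the $2$-approximation in step~4 will then return a set $C$ of radius at most $(2+O(\epsilon))r^*$ on the coreset, and a final triangle inequality against the local coreset property will lift this to a cover of all of $\bigcup_i S_i$ (see Figure~\ref{fig:kcenter}). A rescaling of $\epsilon$ at the end will absorb the hidden constants and yield the claimed $2+\epsilon$ factor.

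First I would control the local radii. By Lemma~\ref{lemma:kcenter}, the optimal $k$-center radius of $S_i\subset\bigcup_j S_j$ is at most $2r^*$, and so the radius $r_i$ returned by the local constant-factor $k$-center routine in step~1 is $O(r^*)$. Second, by the dual-clustering analysis behind Algorithm~\ref{alg:cbc} and Lemma~\ref{lemma:kissing0}, running Algorithm~\ref{alg:cbc} on the disk graph at scale $\epsilon r_i/2$ produces $C(S_i)$ with the property that every point of $S_i$ lies within distance $O(\epsilon r_i)=O(\epsilon r^*)$ of some point of $C(S_i)$. Now let $\{o_1,\dots,o_k\}$ be an optimal $k$-center of $\bigcup_j S_j$, and for each $j$ choose $\tilde c_j\in C(S_{i(j)})$ with $d(o_j,\tilde c_j)=O(\epsilon r^*)$, where $i(j)$ is any index with $o_j\in S_{i(j)}$. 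By the triangle inequality, $\{\tilde c_1,\dots,\tilde c_k\}\subset\bigcup_i C(S_i)$ covers every point of $\bigcup_j S_j$ (and in particular every coreset point) with radius at most $(1+O(\epsilon))r^*$.

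The $2$-approximation applied in step~4 to $\bigcup_i C(S_i)$ then outputs a set $C$ of radius at most $2(1+O(\epsilon))r^*$ on the coreset, and any $p\in S_{i(p)}$ reaches $C$ via its representative in $C(S_{i(p)})$ at an extra cost of $O(\epsilon r^*)$, giving the bound $(2+O(\epsilon))r^*$ overall. The main obstacle I anticipate is simply the bookkeeping: three triangle-inequality hops are chained (point to local representative, optimal center to its representative, representative to step~4 center), each weighted by a different constant times $\epsilon r^*$ through its dependence on $r_i$, and these constants have to be written out explicitly before $\epsilon$ is rescaled. I do not foresee any deeper difficulty once the local coreset property has been imported from the dual-clustering analysis of the previous section.
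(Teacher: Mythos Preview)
Your proposal is correct and shares the paper's overall skeleton: bound each local radius $r_i$ via Lemma~\ref{lemma:kcenter}, invoke the local coreset property to move any point $s\in S_i$ to a representative in $C(S_i)$ at cost $O(\epsilon r^*)$, and then hop from the representative to the step~4 center by the triangle inequality (this is exactly the chain in Figure~\ref{fig:kcenter}).

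The one substantive difference is how you bound the radius $r'$ returned by step~4. The paper applies Lemma~\ref{lemma:kcenter} to $\cup_i C(S_i)\subset\cup_i S_i$ to get that the optimal $k$-center radius on the coreset is at most $2r$, and then writes $r'\le 2r$ directly in its displayed chain; read literally this is loose, since a $2$-approximation on a problem with optimum $2r$ only guarantees $4r$. You instead explicitly snap each optimal center $o_j$ to its nearest coreset point $\tilde c_j$, which shows the optimal radius on the coreset is actually $(1+O(\epsilon))r^*$, so the $2$-approximation genuinely yields $r'\le 2(1+O(\epsilon))r^*$. Your argument is a little longer but closes that gap cleanly; the final $\epsilon$-rescaling you flag is the price of the extra hop, and the paper's chain tacitly needs it as well (it uses $r_i\le 2r$, whereas step~1 is only stated as ``a $k$-center algorithm'').
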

\begin{proof}
Let $r$ be the optimal radius of $k$-center for $\cup_i S_i$.
Since $\cup_i C(S_i) \subset \cup_i S_i$, using Lemma~\ref{lemma:kcenter}, the radius of $k$-center for $\cup_i C(S_i)$ is at most $2r$. The radius of $k$-center inside each set $S_i$ is at most $2r$ for the same reason.
The algorithm computes a covering $S_i$ with balls of radius $r_i \epsilon /2$.
Based on the fact that offline $k$-center has $2$-approximation algorithms and the triangle inequality, the approximation factor of the algorithm proves to be $(2+\epsilon)$-approximation (\Cref{fig:kcenter}). Let $p=\arg \min_{p\in \cup_i C(S_i)} dist(s,p)$, then
\begin{align*}
\forall s \in S_i \exists c \in C, d(s,c) &\leq d(s,p)+d(p,c) \leq r'+r_i\epsilon/2 \\
&\leq 2r+2r\epsilon/2 =(2+\epsilon)r\*
\end{align*}
where $r'$ is the radius of the offline $k$-center algorithm on $C$.
\end{proof}

\subsection{Analysis}

\begin{lemma}\label{lemma:kissing2}
In a metric space with doubling constant $D$, the union of dual clusterings of radius $r$ computed on sets $S_1,\ldots,S_L$ is a $(L\times D^{2\log_2 \frac{8}{\epsilon}})$-approximation for the dual clustering of radius $r(1+\epsilon)$ of their union $(\cup_{i=1}^L S_i)$.
\end{lemma}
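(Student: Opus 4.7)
I would bound each local optimum $|C_i|$ against the global optimum separately and then union‑bound over $i$. Let $C^{\ast}\subseteq\bigcup_{i=1}^{L}S_i$ be an optimal dual clustering of $\bigcup_i S_i$ at radius $r(1+\epsilon)$, with $k^{\ast}$ centres. Since $C_i$ is a smallest dual clustering of $S_i$ at radius $r$, it suffices to exhibit \emph{some} dual clustering of $S_i$ of radius $\le r$ with at most $D^{2\log_2(8/\epsilon)}\,k^{\ast}$ centres; $|C_i|$ can only be smaller, and the factor $L$ falls out of the final sum.

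To produce that witness I first refine $C^{\ast}$ globally. Apply Lemma~\ref{lemma:kissing} iteratively $R=\log_2(8/\epsilon)$ times: each application halves the covering radius and multiplies the number of centres by $D^{2}$, and crucially the new centres remain input points. The outcome is a dual clustering of $\bigcup_i S_i$ with radius $r(1+\epsilon)/2^R$ and size $D^{2R}k^{\ast}=D^{2\log_2(8/\epsilon)}\,k^{\ast}$.

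Next I localise to a fixed $S_i$. For every refined ball $B(c,r(1+\epsilon)/2^R)$ that meets $S_i$, pick a representative $s\in S_i\cap B(c,r(1+\epsilon)/2^R)$ (take $s=c$ when $c\in S_i$). Any $p\in S_i$ lies in some refined ball centred at $c$, so by the triangle inequality
\[
d(p,s)\;\le\;d(p,c)+d(c,s)\;\le\;\frac{2\,r(1+\epsilon)}{2^{R}}\;=\;\frac{r(1+\epsilon)\,\epsilon}{4}\;\le\; r
\]
for $\epsilon$ bounded by a constant. The collection of representatives is therefore a valid radius‑$r$ dual clustering of $S_i$ of size at most $D^{2\log_2(8/\epsilon)}\,k^{\ast}$, so $|C_i|\le D^{2\log_2(8/\epsilon)}\,k^{\ast}$. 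Summing over $i=1,\dots,L$ delivers $|\bigcup_i C_i|\le L\cdot D^{2\log_2(8/\epsilon)}\,k^{\ast}$, which is exactly the claimed approximation factor.

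The main obstacle is calibrating $R$: it must be large enough that the factor‑of‑two blow‑up of the radius incurred by the triangle inequality in the restriction step still fits under $r$ (with the $1+\epsilon$ slack absorbed), yet not so large that the $D^{2R}$ refinement cost is wasted. Choosing $R=\log_2(8/\epsilon)$ is essentially the minimal clean choice. Everything else is bookkeeping: iterated Lemma~\ref{lemma:kissing} for the refinement, plus a one‑line union bound over the $L$ pieces.
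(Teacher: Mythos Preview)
Your argument is correct and follows the same two-step skeleton as the paper's: localise a global covering to each $S_i$ by replacing every centre with a representative in $S_i$ (incurring the factor-$2$ radius blow-up via the triangle inequality), and compensate for that blow-up with iterated halving from Lemma~\ref{lemma:kissing}/Theorem~\ref{theorem:krcenter}, then sum over the $L$ parts. The only cosmetic difference is the order of operations---you refine globally first and then localise, whereas the paper localises first and then invokes the refinement bound---but the arithmetic and the resulting count $L\cdot D^{2\log_2(8/\epsilon)}k^{\ast}$ coincide.
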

\begin{proof}
Each center in the dual clustering with radius $r$ of $P=(\cup_{i=1}^L S_i)$ has at most $\Delta$ adjacent vertices covered by this center.
Consider a point $p\in P$ covered by center $c$ in a solution for $P$. If $p$ and $c$ belong to the same set $S_i$, assign $p$ to $c$. Otherwise, pick any point that was previously covered by $c$ as the center that covers $p$.

While this might increase the radius by a factor $2$, it does not increase the number of centers in each set. Since the algorithm uses radius $\epsilon .r /2$, it increases the number of centers to $D^{2\log_2 \frac{8}{\epsilon}}k$ (based on \Cref{theorem:krcenter} for $R=\frac{4r}{\epsilon r/2}$) but keeps the approximation factor of the radius to $1+\epsilon$.
There are $L$ such sets, so the size of the coreset is $L\times D^{2\log_2 \frac{8}{\epsilon}}k$.
\end{proof}

\begin{theorem}\label{theorem:mrdbscan}
\Cref{alg:kcenter} returns a coreset of size $O(kL)$ for $k$-center in metric spaces with fixed doubling dimension.
\end{theorem}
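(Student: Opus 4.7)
The plan is to derive the theorem as essentially a direct corollary of Lemma~\ref{lemma:kissing2}, by tracking how the size bound there behaves when the doubling dimension and the accuracy parameter are treated as constants. First I would recall that in Algorithm~\ref{alg:kcenter} each machine runs Algorithm~\ref{alg:cbc} on the disk graph of its local set $S_i$ at radius $\epsilon r_i / 2$, and then ships only the resulting independent set $C(S_i)$ to the coordinator. So the overall coreset is $\cup_{i=1}^L C(S_i)$, whose size is the sum of the local coreset sizes.

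Next I would invoke Lemma~\ref{lemma:kissing2}, which already bounds the size of this union by $L \cdot D^{2\log_2(8/\epsilon)} k$ in any doubling metric with doubling constant $D$. So the only remaining task is to argue that the factor in front of $kL$ is a constant under the hypothesis of fixed doubling dimension. By Definition~\ref{def:doubling}, a fixed doubling dimension $b$ means $D = 2^b$, a constant. A short calculation then rewrites the factor as
\[
D^{2\log_2(8/\epsilon)} \;=\; 2^{2b\log_2(8/\epsilon)} \;=\; (8/\epsilon)^{2b},
\]
which is a constant whenever $b$ and $\epsilon$ are regarded as constants (the setting implicit throughout the paper, and consistent with the $(2+\epsilon)$-approximation result of Theorem~\ref{theorem:kcenter}). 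Hiding this constant in the $O(\cdot)$ notation gives a coreset of size $O(kL)$, which is the claim.

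There is no real obstacle here: the content has already been done in Lemma~\ref{lemma:kissing2} and Theorem~\ref{theorem:krcenter}, and the theorem amounts to absorbing $(8/\epsilon)^{2b}$ into the $O$-notation. The only minor care-point is to be explicit that ``fixed doubling dimension'' is what lets us treat $D^{2\log_2(8/\epsilon)}$ as $O(1)$; apart from that observation, the proof is a one-line reduction to the previous lemma.
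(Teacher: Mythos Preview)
Your proposal is correct and follows essentially the same route as the paper: both arguments end at the same constant $(8/\epsilon)^{2\log_2 D}=(8/\epsilon)^{2b}$ multiplying $kL$, and then absorb it into $O(\cdot)$ for fixed $b$ and $\epsilon$. The only cosmetic difference is that the paper re-derives this constant on the spot---making the sandwich $r\le r_i\le 2\beta r$ from Lemma~\ref{lemma:kcenter} explicit and then counting the halving steps via Theorem~\ref{theorem:krcenter}---whereas you invoke Lemma~\ref{lemma:kissing2}, whose proof already contains that derivation; the two presentations are interchangeable.
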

\begin{proof}
The coreset of each set $S_i$ has a radius $r_i$ varying from the optimal radius $(r=r_i)$ to $2\beta.r$, where $\beta$ is the approximation factor of the offline algorithm for $k$-center.
Clearly, the lower bound holds because any radius is at least as much as the optimal (minimum) radius, which means $r\leq r_i$; and Lemma~\ref{lemma:kcenter} when applied to $S_i \subset \cup_i S_i$, yields the upper bound.
\[
r\leq r_i \leq 2\beta.r \Rightarrow \frac{r\epsilon}{4\beta} \leq \frac{r_i\epsilon}{4\beta} \leq \frac{\epsilon r}{2}
\]
Reaching value $r\epsilon$ requires applying \Cref{theorem:kcenter} at most $\log_2 \frac{4\beta}{\epsilon}$ times.

The size of the resulting coreset is therefore at most
\[
(4^{\log_2 D})^{\log_{2} \frac{4\beta}{\epsilon}}kL = (\frac{4\beta}{\epsilon})^{2(\log_2 D)}kL.
\]
Here, we use the best approximation factor for metric $k$-center $(\beta=2)$, which gives a coreset of size $(\frac{8}{\epsilon})^{2(\log_2 D)}kL=O(kL)$ for fixed $\epsilon$.
\end{proof}
\subsection{Generalized Approximation Factor}
We prove that any $2$-approximation algorithm that does not choose a center from the points of another center can be used instead of Gonzalez's algorithm in the MapReduce algorithm of \cite{kcenter1}, and a similar proof will give the approximation factor $4$.
\Cref{alg:gkcenter} shows the generalized algorithm.

\begin{algorithm}[h]
\caption{Generalized Metric $k$-Center in MapReduce}
\label{alg:gkcenter}
\begin{algorithmic}[1]
\Require{A set of sets of points $\cup_{i=1}^L S_i$}
\Ensure{A set of $k$ centers}
\State{Run a $k$-center algorithm on each $S_i$ to find the radius $r_i$ and the set of centers $C(S_i)$.}
\State{Send $C(S_i)$ to set $1$ to find the union $\cup_i C(S_i)$.}
\State{Run a $2$-approximation $k$-center algorithm on $\cup_{i=1}^L C(S_i)$ to find the set of centers $C$.}\\
\Return{$C$.}
\end{algorithmic}
\end{algorithm}

\begin{theorem}\label{theorem:gkcenter}
\Cref{alg:gkcenter} given an $\alpha$-approximation metric $k$-center algorithm with $\alpha \geq 2$ which does not choose a center from the points of another cluster, finds a $2\alpha$-approximation solution.
\end{theorem}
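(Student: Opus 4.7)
The plan is to mirror the standard analysis of the two-round MapReduce $k$-center algorithm of \cite{kcenter1} (the case $\alpha=2$ with Gonzalez, which yields a $4$-approximation), and show that replacing Gonzalez by any $\alpha$-approximation satisfying the stated non-overlapping hypothesis preserves that argument's structure with constants rescaled linearly in $\alpha$.

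First, I would fix an optimal $k$-center for $P=\cup_i S_i$ with radius $r^{*}$ and centers $c_1^{*},\ldots,c_k^{*}$, and write the triangle-inequality decomposition: for any $p\in S_i$ with nearest local center $c(p)\in C(S_i)$ and nearest final center $c\in C$,
\[
d(p,c)\ \le\ d(p,c(p))+d(c(p),c)\ \le\ r_i+r_f,
\]
so it is enough to prove $r_i+r_f\le 2\alpha r^{*}$.

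Second, I would bound $r_i\le \alpha r^{*}$ on each machine using the hypothesis. The property that the local algorithm never places a new center inside the cluster of a previously chosen one forces the $k$ chosen centers to have pairwise distance strictly greater than $r_i$, matching Gonzalez's separation guarantee. Applying the pigeonhole argument underlying the tight analysis of Gonzalez---combined with the $\alpha$-approximation factor and the fact that the $k$ optimal clusters of $\cup_i S_i$ have diameter at most $2r^{*}$---I expect to obtain $r_i\le \alpha r^{*}$. Third, for $r_f$ I would observe that each optimal center $c_j^{*}\in S_{i_j}$ has a representative $c(c_j^{*})\in C(S_{i_j})\subseteq \cup_i C(S_i)$ at distance at most $r_{i_j}\le \alpha r^{*}$, so these $k$ representatives already give a valid $k$-clustering of $\cup_i C(S_i)$; feeding this into the outer $2$-approximation and using $\alpha\ge 2$ should yield a bound on $r_f$ compatible with the desired total $2\alpha r^{*}$.

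The main obstacle will be the step $r_i\le \alpha r^{*}$. The non-overlapping hypothesis is essential there, because the naive bound coming from Lemma~\ref{lemma:kcenter} combined with the $\alpha$-approximation factor yields only $r_i\le 2\alpha r^{*}$, which is too loose to give $2\alpha$ overall. The delicate part is arguing that the separation of the local centers, augmented by a suitably chosen ``next'' witness, plays the same pigeonhole role as the $(k{+}1)$-st farthest point does in the Gonzalez analysis, recovering the tight factor $\alpha$ at this step rather than $2\alpha$.
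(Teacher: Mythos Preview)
Your treatment of the local bound $r_i\le\alpha r^{*}$ is exactly the paper's argument: the non-overlapping hypothesis forces the $k$ local centers together with a farthest witness point to be $k{+}1$ points of $P$ at pairwise distance at least $r_i$, and pigeonhole against the $k$ optimal clusters of diameter $\le 2r^{*}$ yields the contradiction.

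The gap is in your handling of $r_f$. Your plan is to exhibit, via the representatives $c(c_j^{*})$, a feasible $k$-clustering of $\cup_i C(S_i)$ and then invoke the outer $2$-approximation factor. But that clustering has radius at most $(1{+}\alpha)r^{*}$ (or $2r^{*}$ via Lemma~\ref{lemma:kcenter}), so the $2$-approximation only gives $r_f\le 2(1{+}\alpha)r^{*}$, respectively $r_f\le 4r^{*}$. Combined with $r_i\le\alpha r^{*}$ this yields $(3\alpha{+}2)r^{*}$ or $(\alpha{+}4)r^{*}$, never $2\alpha r^{*}$; for $\alpha=2$ you get $6r^{*}$ or $8r^{*}$ instead of $4r^{*}$. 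Using ``$\alpha\ge 2$'' does not close this gap.

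The paper does not bound $r_f$ through the approximation factor at all. It reruns the \emph{same} pigeonhole argument on the second stage: the final centers $C\subseteq\cup_i C(S_i)\subseteq P$ together with a witness in $\cup_i C(S_i)\subseteq P$ are again $k{+}1$ points of $P$ pairwise separated by $r_f$, so $r_f\le 2r^{*}\le\alpha r^{*}$. This step tacitly uses that the outer $2$-approximation algorithm \emph{also} satisfies the non-overlapping property (true for Gonzalez and for parametric pruning). Once you adopt this, the triangle inequality gives $r_i+r_f\le\alpha r^{*}+\alpha r^{*}=2\alpha r^{*}$ directly. Your route via the approximation ratio of the outer algorithm cannot reach this bound; you must reuse the separation/pigeonhole step at the second level as well.
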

\begin{proof}
Assume $r^*$ is the optimal $k$-center radius of $\cup_{i=1}^L S_i$. We prove that $C(S_i)$ covers $S_i$ with radius at most $\alpha \cdot r^*$. Suppose there is a point $p\in S_i$ whose distance to its nearest point from $C(S_i)$ is more than $\alpha r^*$, so $r_i\geq \alpha r^*$. The distance between each pair of points from $C(S_i)$ is at least $r_i$, since the algorithm never chooses a point as a center if it is within distance $r_i$ of another center. Therefore, the set $\{p\} \cup C(S_i)$ has $k+1$ points with distance at least $r_i$ from each other. There are at most $k$ optimal clusters, so at least two of these points must lie inside a cluster, which means their distance is at most $2r^*$. This means that $r_i \leq 2 r^*$, which contradicts the previous bound $r_i \geq \alpha r^*$.

A similar proof follows for $\cup_{i=1}^L C(S_i)$ and $C$. Using triangle inequality, the distance from any point $p$ to its local center $c(p)$ and its final center $c'(c(p))$ is bounded by:
\[
d(p,c(p))+d(c(p),c'(c(p))) \leq \alpha r^*+\alpha r^*=2\alpha r^*.
\]
\end{proof}
Note that the parametric pruning algorithm finds a dominating set by computing a maximal independent set, so the centers returned by this algorithm do not lie inside each others' clusters.
\subsection{A $(1+\epsilon)$-Composable Core-Set}
The composable coreset for $k$-center in doubling metrics can be used to obtain a $(1+\epsilon)$-approximation for constant $\epsilon$ and $k$. All these results also hold for dual clustering, as a result of the proven trade-off between $r$ and $k$.

\begin{algorithm}[h]
\caption{$k$-Center for Fixed $k$}
\label{alg:ikcenter}
\begin{algorithmic}[1]
\Require{A set of sets of points $\cup_{i=1}^L S_i$}
\Ensure{A set of $k$ centers}
\State{Run a $k$-center algorithm locally to find $r_i$.}
\State{Run \Cref{alg:cbc} on the disk graph of each set $S_i$ with radius $\frac{\epsilon r_i}{2}$ locally to find $C(S_i)$.}
\State{Send $C(S_i)$ to set $1$ to find the union $\cup_i C(S_i)$.}
\State{Run \Cref{alg:offline} on $\cup_i C(S_i)$, and let $C'$ be its output.}
\State{Run the optimal $k$-center of $C'$ by checking all $\binom{|C'|}{k}$ possible subsets, and let $T$ be its output.}\\
\Return{$T$}
\end{algorithmic}
\end{algorithm}

\begin{theorem}
\Cref{alg:ikcenter} gives a $(1+\epsilon)$-approximation for $k$-center in doubling metrics, for fixed $k$ and $\epsilon$.
\end{theorem}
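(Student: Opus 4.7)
The plan is to establish a three-link chain tying the brute-force optimum $T$ on the doubly-refined coreset $C'$ to the optimal $k$-center of $\cup_{i=1}^L S_i$. First I would verify that $\cup_i C(S_i)$ is $O(\epsilon r^*)$-dense in $\cup_i S_i$; then invoke \Cref{theorem:kcenter0} to see that $C'$ is a $(1+\epsilon)$-coreset of $\cup_i C(S_i)$; and finally use the triangle inequality (applied twice) to transfer the guarantee from $T$ back to $\cup_i S_i$. A final reparameterization of $\epsilon$ absorbs the constant factors that accumulate along the chain.

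In more detail, let $r^*$ denote the optimal $k$-center radius on $\cup_i S_i$. By \Cref{lemma:kcenter} combined with the $\alpha$-approximation run in Step~1, $r_i \leq 2\alpha\, r^*$ at every site. The standard dominating-set analysis of \Cref{alg:cbc} on the disk graph of radius $\epsilon r_i/2$ implies that every point of $S_i$ lies within distance $O(\epsilon r_i) = O(\epsilon r^*)$ of some point of $C(S_i)$, so $\cup_i C(S_i)$ is $\epsilon'$-dense in $\cup_i S_i$ for some $\epsilon' = O(\epsilon)$. Moving each of the $k$ global optimum centers to its nearest neighbor in this dense subset shows that the optimal radius on $\cup_i C(S_i)$ itself is at most $(1+\epsilon')\, r^*$. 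Applying \Cref{theorem:kcenter0} with parameter $\epsilon$ then certifies that $C'$ is a $(1+\epsilon)$-coreset of $\cup_i C(S_i)$, so the brute-force optimum $T$ on $C'$ covers $\cup_i C(S_i)$ within radius $(1+\epsilon)(1+\epsilon')\, r^*$. One further triangle-inequality hop extends the bound to $\cup_i S_i$: for any $s \in S_i$, let $q \in \cup_i C(S_i)$ be its nearest neighbor and $t \in T$ the center closest to $q$, whereupon
\[
d(s,t)\ \leq\ \epsilon'\, r^* \;+\; (1+\epsilon)(1+\epsilon')\, r^* \;=\; \bigl(1+O(\epsilon)\bigr)\, r^*.
\]
Rescaling $\epsilon$ by the hidden constant yields the stated $(1+\epsilon)$ factor.

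The principal caveat is feasibility rather than approximation: by \Cref{theorem:kcenter0} we have $|C'| = (\tfrac{4}{\epsilon})^{2\log_2 D}k$, so the brute-force enumeration of $\binom{|C'|}{k}$ subsets in Step~5 runs in polynomial time exactly when $k$ and $\epsilon$ are fixed and the doubling constant $D$ is constant---precisely the hypothesis of the theorem. The only careful bookkeeping step is relating each local $r_i$, which is tied to the optimum of $S_i$, to the global $r^*$; \Cref{lemma:kcenter} provides that link via $r_i^* \leq 2r^*$, after which every $\epsilon$-scaled inequality can be expressed uniformly in terms of $r^*$ and the telescoping of the two coreset layers goes through cleanly.
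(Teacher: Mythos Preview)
Your proposal is correct and follows essentially the same route as the paper: chain the local coresets $\cup_i C(S_i)$ (a $(1+\epsilon)$-dense subset of $\cup_i S_i$, which the paper obtains by citing \Cref{theorem:mrdbscan}) with the second-layer coreset $C'$ from \Cref{theorem:kcenter0}, compose the two $(1+\epsilon)$ factors, and observe that brute force over $\binom{|C'|}{k}$ subsets is polynomial for fixed $k,\epsilon,D$. Your write-up is more explicit than the paper's---you spell out the density argument via \Cref{lemma:kcenter} and the two triangle-inequality hops, whereas the paper simply multiplies the approximation factors to get $(1+\epsilon)^2\approx 1+2\epsilon$---but the underlying argument is identical.
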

\begin{proof}
The approximation factor of $C$ is $1+\epsilon$ and its size is $(\frac{8}{\epsilon})^{2\log_2 D}kL$, based on \Cref{theorem:mrdbscan}.
Repeating the core-set computation gives the approximation factor $(1+\epsilon)^2 \approx 1+2\epsilon$, and has size $(\frac{4}{\epsilon})^{2\log_2 D}k$ as proved in \Cref{theorem:kcenter0}.
Checking all possible choices for $k$ centers from $C'$ takes polynomial time, for fixed $k$ and $\epsilon$, since:
\[
\binom{(\frac{4}{\epsilon})^{2\log_2 D}k}{k}\leq (\frac{e(\frac{4}{\epsilon})^{2\log_2 D}k}{k})^k=(\frac{4}{\epsilon})^{2k\log_2 D}.
\]
Since the last step was optimal, the approximation factor of $T$ for $k$-center is $1+\epsilon$.
\end{proof}
\section{The Exponential Nature of The Trade-off Between $r$ and $k$}
The same constructive algorithm yields an exponential lower bound on the trade-off between $r$ and $k$ of $k$-center.

We build the following example by placing a point at the center of each ball from ball covering problem using balls of radius $r/2$ to cover a ball of radius $r$, and repeating this process recursively.
\begin{example}\label{example}
Cover the ball of radius $r$ with $D$ balls of radius $r/2$, where $D$ is the doubling constant of the metric space. Repeat this process with each of the balls of radius $r/2$.
The number of balls in the $t$-th iteration of this process is $D^t$ and their radius is $\frac{r}{2^t}$.
\end{example}

\begin{lemma}\label{lemma:circlepacking}
A circle packing of radius $R$ with circles of radius $r/2$ is an upper bound for the ball cover using circles of radius $r$, and the circle packing using circles of radius $r$ is a lower bound for the ball cover of radius $r$.
\end{lemma}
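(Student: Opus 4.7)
The plan is to prove the sandwich inequality $P(R,r) \leq C(R,r) \leq P(R,r/2)$, where $P(R,\rho)$ denotes the maximum number of points one can place inside a ball of radius $R$ with pairwise distance at least $2\rho$ (the packing number), and $C(R,\rho)$ denotes the minimum number of balls of radius $\rho$ needed to cover the ball of radius $R$ (the covering number). The two inequalities are established by independent arguments, both of which are standard tools in the packing/covering duality and use only the triangle inequality, so the proof should work verbatim in any metric space with a notion of balls.

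For the upper bound $C(R,r) \leq P(R,r/2)$, I would take a \emph{maximal} packing of points $p_1,\ldots,p_N$ inside the $R$-ball with pairwise distance at least $r$ (so that the open balls of radius $r/2$ around them are disjoint). By maximality, no additional point can be added, which means every point $q$ of the $R$-ball lies within distance $r$ of some $p_i$ (otherwise $q$ itself could be inserted as a new packing center). Therefore the balls $B(p_i,r)$ for $i=1,\ldots,N$ form a cover of the $R$-ball, which shows that an optimal $r$-cover uses at most $N = P(R,r/2)$ balls.

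For the lower bound $P(R,r) \leq C(R,r)$, I would start with an optimal cover $c_1,\ldots,c_M$ of the $R$-ball using balls of radius $r$, and any packing $q_1,\ldots,q_K$ of points in the $R$-ball with pairwise distances at least $2r$. Each $q_j$ belongs to some covering ball $B(c_{f(j)}, r)$. If two distinct packing centers $q_j$ and $q_{j'}$ were assigned to the same $c_k$, then by the triangle inequality $d(q_j,q_{j'}) \leq d(q_j,c_k) + d(c_k,q_{j'}) \leq 2r$, contradicting the packing condition (taking the packing to use strict separation, or treating the boundary case as degenerate). Hence the assignment $j \mapsto f(j)$ is injective and $K \leq M$, which gives $P(R,r) \leq C(R,r)$.

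The main obstacle I expect is the usual bookkeeping around boundary effects and strict vs.\ non-strict distance inequalities in the definitions of packing and covering; one must decide whether packed balls are required to be contained in the outer ball or only have their centers inside, and choose the open/closed convention so that the maximality argument in the upper bound and the injectivity argument in the lower bound both go through cleanly. Once a consistent convention is fixed, both directions are short and purely metric.
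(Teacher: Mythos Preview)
Your proposal is correct and follows essentially the same approach as the paper: a maximal packing by $(r/2)$-balls must, when doubled, cover the whole $R$-ball (the paper phrases this as ``no room for more circles, so increasing the radius covers''), and disjoint $r$-packed balls force an injection into any $r$-cover (the paper condenses this to ``since all those circles are disjoint''). Your write-up is in fact more explicit about the triangle-inequality step and the maximality condition than the paper's own proof, which is rather terse and even slightly inconsistent in its labeling of the two radii.
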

\begin{proof}
The circle packing has the maximum number of circles, so there is no room for more circles in the empty spaces between those circles. Therefore, increasing the radius of circles to twice the previous radius will cover the circle of radius $R$.
So, the circle packing for circles of radius $r$ gives an upper bound on the minimum number of circles required to cover a circle of radius $R$.

On the other hand, circle packing for circles of radius $r/2$ is a lower bound for the minimum number of circles required to cover the circle of radius $R$, since all those circles are disjoint.
\end{proof}

\begin{theorem}\label{theorem:tradeoff}
The optimal trade-off between $k$ and $r$ is exponential.
\end{theorem}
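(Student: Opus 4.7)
The plan is to convert Example \ref{example} into an explicit witness that matches the upper bound of Lemma \ref{lemma:kissing} up to constants, thereby showing that the trade-off between $k$ and $r$ cannot be subexponential. I would run the recursion of Example \ref{example} down to depth $t$ and place one point at the centre of every ball it generates, obtaining a point set $P_t$ of size $\sum_{i=0}^{t}D^{i}=\Theta(D^{t})$ that sits inside the initial ball of radius $r$. The easy direction is immediate: a single centre (the root of the recursion) covers $P_t$ with radius $r$, so one centre suffices at scale $r$.

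For the reverse direction I need a lower bound on the number of balls of radius $r/2^{t}$ required to cover $P_t$. This is where Lemma \ref{lemma:circlepacking} enters: by construction the $D^{t}$ centres produced at the deepest level of the recursion are centres of a disjoint family of balls at a scale slightly smaller than $r/2^{t}$, so each covering ball of radius $r/2^{t}$ can contain only $O(1)$ of them (the constant controlled by $D$). Dividing yields a $D^{\Omega(t)}=2^{\Omega(bt)}$ lower bound on the size of any cover of $P_t$ at radius $r/2^{t}$, matching the $D^{2t}$ upper bound of Lemma \ref{lemma:kissing} up to the constant $2$ in the exponent. Hence the ratio of required centres between scales $r$ and $r/2^{t}$ is exponential in $t$, which is exactly what the theorem asserts.

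The main obstacle is making the packing step rigorous in a generic doubling metric. The doubling inequality only upper-bounds coverings, so it does not by itself guarantee that the leaf centres realise a packing of the required size. I would therefore refine Example \ref{example} to specify that at every level the $D$ sub-balls are chosen to be (essentially) disjoint rather than merely a covering family, which is achievable in Euclidean space via a standard hexagonal-type argument and, more generally, in any doubling metric whose doubling constant is realised by an actual packing. With this refinement the packing half of Lemma \ref{lemma:circlepacking} applies at every scale and the matching exponential bounds close the loop, establishing the theorem.
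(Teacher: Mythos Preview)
Your proposal is correct and follows essentially the same route as the paper: the lower bound comes from the recursive construction of Example~\ref{example} (giving a factor~$D$ per halving of the radius), the upper bound from Lemma~\ref{lemma:kissing} (giving~$D^{2}$ per halving), and Lemma~\ref{lemma:circlepacking} is invoked to translate between packings and coverings so that the two bounds can be compared. If anything, your write-up is more careful than the paper's own proof in that you explicitly flag the need for the sub-balls in Example~\ref{example} to realise a genuine packing---a point the paper leaves implicit.
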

\begin{proof}
Based on \Cref{lemma:circlepacking}, \Cref{theorem:krcenter} gives both a lower bound and an upper bound on the trade-off between $k$ and $r$ for $O(k)$ points and radius $r/2$, within a constant factor for doubling metrics.
\Cref{lemma:kissing} gives the upper bound $D^2$ for each step, and the lower bound in \Cref{example} is $D$.
Substituting this bound in the trade-off of \Cref{theorem:krcenter} gives the ratio between the upper bound and the lower bound of $k$ in this trade-off which is $(\frac{4}{\epsilon})^{\log_2 D}k$,
where $\epsilon$ is the radius of the balls used for covering the points.
\end{proof}
Better trade-offs in $\mathbb{R}^2$ and $\mathbb{R}^3$ can be achieved by replacing $D^2$ with the square of the bound from circle/sphere covering for radius $\frac{r}{\epsilon}$ in \Cref{theorem:tradeoff} instead.
\section{A Comparison of The Algorithms for Metric $k$-Center}\label{sec:compare}
We consider variations of Gonzalez's greedy algorithm and the parametric pruning algorithm in which arbitrary choices are replaced by random ones. In the worst case, even the randomized version of these algorithms cannot achieve an approximation factor better than $2$.

We also prove the solutions of Gonzalez's algorithm are a subset of the solutions of the parametric pruning algorithm.

\begin{lemma}\label{lemma:rg}
There are instances in which randomized Gonzalez's algorithm cannot do better than $2$-approximation in the best case.
\end{lemma}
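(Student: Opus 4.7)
The plan is to construct a concrete metric instance witnessing the claim, then argue that on this instance every possible run of randomized Gonzalez returns a $2$-approximation. Fix any integer $k \ge 2$ and a constant $D > 4$, and take the $3k$ points on the real line
$\bigcup_{i=1}^{k}\{iD-1,\, iD,\, iD+1\}$. Each triple $\{iD-1,\, iD,\, iD+1\}$ is a ``cluster'' of diameter $2$, and distinct clusters are separated by at least $D-2 > 2$. The optimum $k$-center radius is $r^{*}=1$, realized by choosing the $k$ midpoints $\{D, 2D, \dots, kD\}$ as centers.

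Next I would argue that on every run of randomized Gonzalez the output radius is exactly $2$. First, because the inter-cluster gap strictly exceeds the within-cluster diameter, every greedy step picks in a cluster that does not yet contain a selected center, so the $k$ chosen centers include exactly one point from each cluster. Second, the greedy step~$2$ pick---starting from any $p$ in a cluster $i_0$---selects the farther extreme of the farthest cluster $j \ne i_0$: on the line, the midpoint $jD$ lies strictly between the two extremes and is therefore strictly closer to every external point than the farther extreme, so the step-$2$ selection is unambiguously an extreme of cluster $j$. Third, once an extreme $jD\pm 1$ is chosen as a center, the opposite extreme $jD\mp 1$ is uncovered at distance exactly $2$ from it and at distance $\Omega(D) \gg 2$ from every other center; hence it contributes exactly $2$ to the final radius, while all uncovered points sit within distance at most $2$ of their own cluster's chosen center. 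Combining these facts, the final radius is exactly $2$ in every possible execution.

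The hard part will be the tie cases. For $k \ge 3$, later greedy steps may face ties between the two extremes of a ``middle'' cluster or, when the starting point is itself a midpoint, between an extreme and the midpoint of a cluster sitting symmetrically between the two already-chosen centers; and the random starting point itself may land on a midpoint. I would check case by case that none of these tie resolutions can reduce the output radius below $2$: the cluster entered in step~$2$ has an extreme as its only center, and by the third observation above that cluster alone already contributes radius exactly $2$ to the output, regardless of how any subsequent tie is broken in later steps. Therefore, for every random choice made by the algorithm the output radius equals $2 = 2r^{*}$, yielding the claimed best-case approximation factor of $2$.
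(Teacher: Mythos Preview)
Your construction is correct and cleanly proves the lemma: separating $k$ triples by a gap $D>4$ forces Gonzalez to place exactly one center per triple, and the farthest-point rule in step~$2$ always selects an \emph{extreme} of the far triple, so that triple alone pins the output radius at $2=2r^{*}$ regardless of later tie-breaking. The tie discussion is fine; the crucial fact is that the step-$2$ cluster never receives another center.

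This is a genuinely different (and in one respect stronger) route than the paper's. The paper uses five equally spaced points $P_1,\dots,P_5$ with $d(P_i,P_{i+1})=\tfrac12$ and $k=2$, arguing that the farthest-point step always lands on an endpoint and hence the radius is $1$. Your construction instead uses well-separated clusters, which works for every $k\ge 2$ and makes the ``one center per cluster'' invariant automatic. More importantly, your argument is airtight on the best-case claim: in the paper's five-point example, if the random start is $P_2$ (or $P_3$, or $P_4$), Gonzalez picks $\{P_2,P_5\}$ (or a symmetric variant), which has radius $\tfrac12$ --- optimal --- so that instance does \emph{not} witness a best-case ratio of $2$. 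Your separated-cluster instance rules out such lucky starts because the step-$2$ center is always an extreme of its triple, and that triple's opposite extreme is then necessarily at distance $2$ from its unique center. So your approach both generalizes to arbitrary $k$ and closes a gap in the paper's example.
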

\begin{proof}
We prove this lemma by the counterexample of \Cref{fig:gonzalez}.
The measures of the example are as follows:
\[
d(P_i,P_{i+1})=\frac{1}{2},\quad i=1,\cdots,4
\]
The farthest neighbor computation prevents the algorithm from choosing the optimal solution $\{P_2,P_4\}$ with cost $\frac{1}{2}$, since it chooses at least one of the endpoints $P_1,P_5$. Therefore, the cost of the solution computed by randomized Gonzalez's algorithm is $1$. So, the approximation factor is $\frac{1}{1/2}=2$.
\begin{figure}[h]
\centering
\includegraphics[scale=0.8]{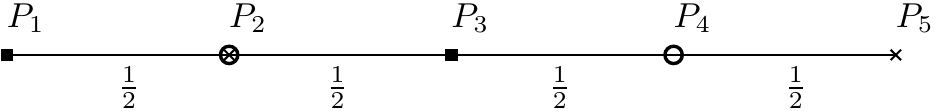}
\caption{Gonzalez's algorithm never finds $P'_1,P'_2,P'_3$ as the solution.}
\label{fig:gonzalez}
\end{figure}
\end{proof}

\begin{lemma}\label{lemma:gp}
The solutions of Gonzalez's algorithm are a subset of the solutions of the parametric pruning algorithm.
\end{lemma}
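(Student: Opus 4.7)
The plan is to show that any set $S=\{c_1,\ldots,c_k\}$ returned by Gonzalez's greedy algorithm also arises as an output of Algorithm~\ref{alg:parametric} under some consistent choice of maximal independent set inside Algorithm~\ref{alg:cbc}. First I would record the two structural invariants of a Gonzalez output. Writing $r_G = \max_{p\in P}\min_i d(p,c_i)$ for the Gonzalez radius, the farthest-point rule yields (i) $d(c_i,c_j)\geq r_G$ for every pair of distinct centers, and (ii) every $p\in P$ lies within distance $r_G$ of some $c_i\in S$, since otherwise such a $p$ would itself have been picked as a center.

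Next I would focus on the iteration of Algorithm~\ref{alg:parametric} at which the most recently added edge has length equal to $r_G$, i.e.\ the point at which $G'$ contains exactly the edges of length at most $r_G$. I would argue that at this iteration $S$ is a valid maximal independent set of $G'^2$, so Algorithm~\ref{alg:cbc} may legitimately return $S$. For maximality, property (ii) guarantees every non-center $p$ has a direct $G'$-edge to its nearest center $c_{i(p)}$, hence is dominated by $S$ in $G'^2$. For independence, property (i) rules out direct $G'$-edges between distinct centers, and I would rule out the length-two $G'$-paths that Algorithm~\ref{alg:cbc} could add between two centers by exploiting the greedy order: a common neighbor $w$ of $c_i,c_j$ in $G'$ would sit within distance $r_G$ of both, so it could have been taken as a center in place of the later of $c_i,c_j$, contradicting the actual Gonzalez selection.

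To guarantee that the loop of Algorithm~\ref{alg:parametric} actually reaches this iteration and returns $S$ rather than terminating earlier, I would exploit the flexibility in Algorithm~\ref{alg:cbc} to return any maximal IS. At every previous iteration with threshold strictly below $r_G$, one can select a maximal IS of size greater than $k$ (the full vertex set is maximal once $G'$ is empty, and maximal ISs of size $>k$ persist as edges are added until the $r_G$-threshold forces the size to drop to $k$), forcing the main loop to continue; at the iteration of threshold $r_G$, one selects $S$ itself, which triggers the return since $|S|=k$.

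The hard part will be the independence step, specifically excluding length-two $G'$-paths between two centers through a common non-center vertex $w$. Neither invariant (i) nor (ii) rules this out directly, since $w$ could a priori satisfy $d(c_i,w),d(c_j,w)\leq r_G$ while still respecting $d(c_i,c_j)\geq r_G$. Overcoming this obstacle requires a careful case analysis driven by the Gonzalez selection order, showing that at the step where the later of $c_i,c_j$ was picked, the point $w$ was at least as far from the current center set and would therefore have been selected instead, contradicting its non-center status in $S$.
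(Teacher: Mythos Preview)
There is a genuine gap in the independence step, and the resolution you sketch does not work. Consider three collinear points $0,1,2$ with $k=2$. Gonzalez started at $0$ returns $C=\{0,2\}$ with $r_G=1$. In the disk graph $G'$ with threshold $r_G$, the point $w=1$ is a common $G'$-neighbor of both centers, so $0$ and $2$ are adjacent in $G'^2$ and $C$ is \emph{not} an independent set of $G'^2$. Your proposed contradiction (``$w$ was at least as far from the current center set and would therefore have been selected instead'') fails here: when $c_2=2$ was picked, $w=1$ was at distance $1$ from $\{c_1\}$ while $2$ was at distance $2$, so $w$ was strictly \emph{closer}, and Gonzalez was right to pick $2$. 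More generally, the farthest-point rule forces the later center $c_j$ to be far from earlier centers, but says nothing preventing some third point $w$ from being simultaneously within $r_G$ of two centers.

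The approach that does work, and that the paper uses (somewhat tersely), is to exploit the fact that Algorithm~\ref{alg:parametric} adds edges one at a time, with ties among equal-length edges broken arbitrarily. You therefore do not have to land on the graph containing \emph{all} edges of length $\le r_G$; you may stop at an intermediate iteration. Order the length-$r_G$ edges so that, for each non-center $p$, the edge from $p$ to its nearest center is added before any other length-$r_G$ edge incident to $p$. At the first iteration where every non-center has acquired such an edge, $C$ is dominating in $G'$ (hence maximal in $G'^2$), while the ``dangerous'' edges that would create length-two paths between centers have not yet been added, so $C$ is still independent in $G'^2$. The paper further combines this with choosing the greedy MIS visit order to coincide with the Gonzalez selection order, and it first strips any trailing Gonzalez centers that do not decrease the radius, which guarantees the strict separation $d(c_i,c_j)>r_G$ needed for independence in $G'$ itself. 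Your plan of fixing a single threshold graph and then arguing independence directly cannot be salvaged; you need to use the edge-by-edge freedom.
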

\begin{proof}
Let $r$ be the radius and $C$ be the set of centers computed by Gonzalez's algorithm, after removing the last centers if they do not decrease the cluster radius.
Consider the graph $G=(P,E)$ such that $P$ is the set of input points and $E$ is the set of all pairs of points with distance at most $r$.
By the anti-cover property of Gonzalez's algorithm, $C$ is an independent set of $G^2$.

Since the maximal independent set algorithm visits vertices in an arbitrary order, use the order of visiting used in Gonzalez's algorithm.
Consider an instance of the parametric pruning algorithm that at the $i$-th step, visits the points of $P$ in the order of Gonzalez's algorithm after it has chosen $i$ centers.
In such an instance, all the edges between the points of $C$ and their corresponding members from $P$ have already appeared in the sorted list of edges, since they have a lower edge weight than $r$. Also, there are no edges between the points of $C$, since Gonzalez's algorithm chooses the farthest point from previous points, so the distance between the centers is more than $r$.
Therefore, $C$ is a maximal independent set of the disk graph of radius $r$.
All the radii less than $r$ that are checked by the parametric pruning algorithm will fail since $r$ is the minimum radius that covers $P$ using points of $C$.
For radius $r$, the parametric pruning algorithm finds the solution with $C$ as centers.

We proved that there is an execution of the maximal independent set algorithm on the square of $G$ that finds $C$ as the set of centers.
\end{proof}

\begin{lemma}\label{lemma:po}
There are examples in which randomized parametric pruning algorithm for $k$-center cannot do better than a $2$-approximation in the best case.
\end{lemma}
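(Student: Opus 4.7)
The plan is to exhibit a concrete instance on which the randomized parametric pruning algorithm can never beat approximation ratio $2$, regardless of the random choices made during maximal independent set extraction. The natural candidate is the five collinear points from Lemma~\ref{lemma:rg} (spacing $1/2$, $k=2$, optimum $r^{*}=1/2$ attained by $\{P_2,P_4\}$). By Lemma~\ref{lemma:gp} the factor-$2$ output $\{P_1,P_5\}$ produced by Gonzalez is already a valid output of \Cref{alg:parametric}, so at least one ratio-$2$ solution is attainable; the task reduces to ruling out every other output that might beat it.

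I would first trace \Cref{alg:parametric} step by step on the instance: sort the pairwise distances, insert the edges into $G'$ in order, maintain $G'^{2}$ via \Cref{alg:cbc}, and locate the smallest index at which $G'^{2}$ admits a maximal independent set of size at most $k$. At this stopping radius $r'$ I would enumerate every maximal independent set of $G'^{2}$ of size $\le k$, since each is a possible output of the randomized MIS subroutine, and then compute the actual $k$-center radius each one induces in the input metric, checking the bound $\ge 2r^{*}$.

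The main obstacle is that the bare 5-point line admits maximal independent sets of $G'^{2}$ at the stopping radius --- for example $\{P_1,P_4\}$ or $\{P_2,P_5\}$ --- whose actual radius equals the optimum $r^{*}$, so the unmodified configuration is insufficient. To defeat these interior MIS, I would augment the construction by attaching tight twin points to each of the extremes $P_1$ and $P_5$: the twin edges come first in the sorted order, and their presence enlarges the $G'^{2}$-neighbourhoods around the two extreme clusters so that every maximal independent set of size $\le k$ at the new stopping radius is forced to include a representative drawn from each extreme cluster, mirroring the farthest-neighbour behaviour exploited in Lemma~\ref{lemma:rg}. With this augmentation, the central point of the configuration becomes an unavoidable witness at distance at least $2r^{*}$ from every chosen center, and the best-case approximation ratio of randomized parametric pruning is pinned at exactly $2$, as claimed.
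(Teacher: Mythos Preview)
Your proposal has a genuine gap: the twin-point augmentation does not do what you claim. Take the seven points $P_1',P_1,P_2,P_3,P_4,P_5,P_5'$ with the twins placed at distance $\delta\ll 1/2$ from $P_1$ and $P_5$. At the stopping radius $r'=1/2$, the graph $G'$ is the path $P_1'\!-\!P_1\!-\!P_2\!-\!P_3\!-\!P_4\!-\!P_5\!-\!P_5'$, and in $G'^{2}$ the pair $\{P_1,P_4\}$ (or $\{P_2,P_5\}$) is still a maximal independent set of size~$2$: $P_1$ dominates $P_1',P_2,P_3$, while $P_4$ dominates $P_3,P_5,P_5'$, and $P_1,P_4$ are three hops apart so they are independent. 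The $k$-center radius of $\{P_1,P_4\}$ on the augmented instance is $1/2+O(\delta)$, which matches the optimum. So the randomized maximal-independent-set subroutine can still land on a near-optimal solution, and your ``forced to include a representative from each extreme cluster'' claim is false. The augmentation enlarges the $G'^{2}$-neighbourhood of $P_1$ and $P_5$ only by the twin itself, which is not enough to block the asymmetric pairs $\{P_1,P_4\}$ and $\{P_2,P_5\}$.

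The paper takes a structurally different and much shorter route. Rather than trying to steer the random MIS choices, it exploits the single invariant that \Cref{alg:parametric} can \emph{only} output independent sets of $G'^{2}$. One then builds an instance in which the optimal $k$-center solution $\{p_1,p_2\}$ is a dominating set that is \emph{not} an independent set (the two optimal centers are within $2r^{*}$ of each other), while every independent dominating set---such as $\{p_1,q\}$ in the figure---necessarily has radius $2r^{*}$. No case analysis over random orderings is needed: the optimum is simply outside the algorithm's output space. If you want to repair your approach, this is the missing idea; trying to control which MIS the algorithm picks by local padding will keep running into asymmetric independent sets like $\{P_1,P_4\}$.
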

\begin{proof}
Any solution in the form of a dominating set of the unit disk graph that is not also an independent set is a solution that the parametric pruning algorithm cannot find.
See \Cref{fig:parametric} for an example. In this example, $\{p_1,p_2\}$ are an optimal solution, but the parametric pruning algorithm does not find them. $\{p_1,q\}$ is a solution that the parametric pruning algorithm can find, because it is an independent set.
\begin{figure}[h]
\centering
\includegraphics{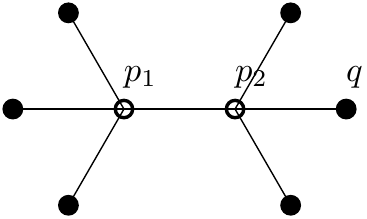}
\caption{$\{p_1,p_2\}$ is a dominating set which is not also an independent set.}
\label{fig:parametric}
\end{figure}
\end{proof}
\Cref{lemma:rg,lemma:gp,lemma:po} show the relation between the $2$-approximation solutions of $k$-center, as stated in the following theorem.
\begin{theorem}\label{theorem:venn}
The solutions of randomized Gonzalez's algorithm are a proper subset of the solutions of the randomized parametric pruning algorithm which are a proper subset of the $2$-approximation solutions for metric $k$-center. Also, the optimal solutions are not a subset of the solutions of the randomized parametric pruning algorithm.
\end{theorem}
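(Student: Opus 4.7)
The plan is to read the theorem as four separate assertions and discharge each by appealing to one of the three preceding lemmas. Write $\mathcal{G}$, $\mathcal{P}$, $\mathcal{O}$, $\mathcal{A}_2$ for the families of center-sets reachable by randomized Gonzalez, reachable by randomized parametric pruning, optimal for $k$-center, and of approximation factor at most $2$, respectively, on a given instance.

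The weak chain $\mathcal{G} \subseteq \mathcal{P} \subseteq \mathcal{A}_2$ is immediate: the first containment is exactly Lemma \ref{lemma:gp}, where an execution of parametric pruning using the visiting order of Gonzalez produces Gonzalez's center set as a maximal independent set of the squared disk graph at the Gonzalez radius; the second is the known $2$-approximation guarantee of parametric pruning. The non-containment $\mathcal{O} \not\subseteq \mathcal{P}$ and the strictness $\mathcal{P} \subsetneq \mathcal{A}_2$ both follow at once from Lemma \ref{lemma:po}: on the instance of Figure \ref{fig:parametric} the optimal set $\{p_1,p_2\}$ is a dominating but not independent set of the disk graph, so it lies in $\mathcal{O}\cap\mathcal{A}_2$ and outside $\mathcal{P}$, settling both claims simultaneously.

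The remaining task is the strict inclusion $\mathcal{G} \subsetneq \mathcal{P}$. Here I would combine Lemma \ref{lemma:rg} with a direct check that some maximal independent set of $G'^2$ at the parametric-pruning termination threshold is a legitimate output of the MIS subroutine (under an appropriate vertex ordering) while lying outside the reach of Gonzalez's farthest-point rule. On the instance of Figure \ref{fig:gonzalez} Gonzalez is forced to include a diametral endpoint, so any non-diametral MIS of $G'^2$ at the critical threshold provides a witness in $\mathcal{P}\setminus\mathcal{G}$; this witness exists as soon as the disk-graph-square admits more than one maximal independent set of size $k$.

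The main obstacle I anticipate is precisely this last verification, since the preceding lemmas do not hand over an explicit witness in $\mathcal{P}\setminus\mathcal{G}$: one must identify an MIS of $G'^2$ that parametric pruning can return and check that no starting vertex of Gonzalez yields the same set under the farthest-point rule. The check is per-instance and short, but it is the only step that is not a one-line appeal to an earlier lemma; the rest of the proof is bookkeeping.
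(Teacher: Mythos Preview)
Your proposal is correct and follows the same approach as the paper: the paper's proof is a one-line appeal to Lemmas~\ref{lemma:rg}, \ref{lemma:gp}, and \ref{lemma:po}, and you have correctly identified which lemma discharges which containment. You are in fact more careful than the paper, since you flag that the \emph{strictness} of $\mathcal{G}\subsetneq\mathcal{P}$ is not literally handed over by any single lemma and requires exhibiting, on the instance of Figure~\ref{fig:gonzalez}, a maximal independent set of $G'^2$ that the farthest-point rule cannot reach; the paper leaves this verification implicit.
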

The Venn diagram of sets in \Cref{theorem:venn} is shown in \Cref{fig:venn}, where the sets $2 \cdot OPT$ denote the set of $2$-approximation solutions and $OPT$ is the set of optimal solutions.
\begin{figure}[h]
\centering
\includegraphics[scale=0.8]{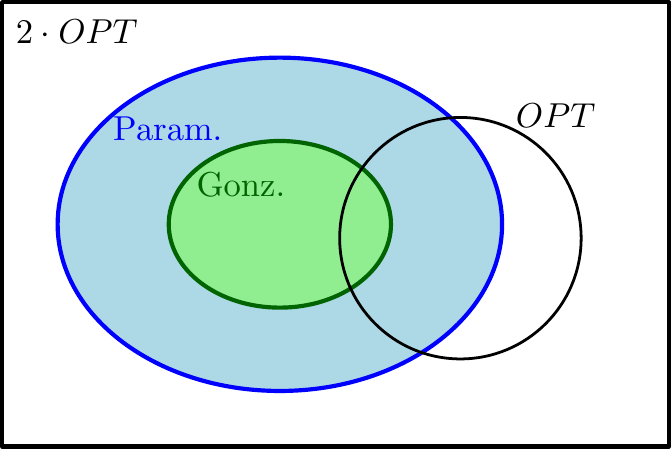}
\caption{The Venn diagram of worst-case solutions of randomized Gonzalez's algorithm (Gonz.), randomized parametric pruning algorithm (Param.), $2$-approximation ($2\cdot OPT$) and exact solutions ($OPT$) for $k$-ceneter.}
\label{fig:venn}
\end{figure}
\section{Efficient Parametric Pruning Algorithm}
We need to keep the time and space complexity of the coreset computation algorithm near linear. Since the time complexity of parametric pruning in general metrics is $O(n^3)$ and its space complexity is $O(n^2)$, we give a $(2+\epsilon)$-approximation algorithm with $O(\frac{nk}{\epsilon})$ time and $O(n)$ space.
Later, we use this algorithm to find a $(2+\epsilon)$-approximation algorithm for general metrics.

\begin{algorithm}[H]
\caption{Efficient Parametric Pruning}
\label{alg:memory}
\begin{algorithmic}[1]
\Require{A point set $P$, an integer $k$}
\Ensure{A set of centers $C$}
\State{$R=$ the radius of Gonzalez's $k$-center.}
\For{$r=R/2$, $k' < k$ and $r \leq 2(1+\epsilon)R$ ,$R\gets R(1+\epsilon)$}
\State{$k'=0$}
\State{$mark[1,\ldots,|P|]=false$}
\For{$i \in \{1,\ldots,|P|\}$}
\If{$mark[i]=false$}
\State{$mark[i]=true$}
\State{$C\gets C\cup \{P[i]\}$}
\State{$k'\gets k'+1$}
\For{$j \in \{1,\ldots,|P|\}$}
\If{$mark[j]=false$ and $dist(P[i],P[j])\leq R$}
\State{$mark[j]=true$}
\EndIf
\EndFor
\EndIf
\EndFor
\EndFor
\\ \Return{$C$}
\end{algorithmic}
\end{algorithm}

\begin{theorem}
The time complexity of \Cref{alg:memory} is $O(\frac{nk}{\epsilon})$ and its memory complexity is $O(n)$.
\end{theorem}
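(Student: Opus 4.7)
The plan is to bound the outer-loop iteration count and the per-iteration work separately and then multiply them; the space bound will follow by inspection. A preliminary call to Gonzalez's algorithm computes an initial radius $R$ in $O(nk)$ time and $O(n)$ space. The outer loop then tries candidate radii starting at $R/2$ and scales them up by a factor of $(1+\epsilon)$ each iteration, stopping when the radius exceeds $2(1+\epsilon)R$, a total dynamic range of factor $4(1+\epsilon)$. Hence the outer loop runs at most
\[
\log_{1+\epsilon}\bigl(4(1+\epsilon)\bigr)=O\!\left(\tfrac{1}{\epsilon}\right)
\]
times.

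For the per-iteration cost, each outer iteration performs a greedy $r$-net sweep over $P$: for every still-unmarked point the algorithm appends it to $C$ and performs an $O(n)$ linear scan to mark the neighbors within the current radius, for $O(n)$ work per center added. I would read the $k'<k$ clause in the outer \texttt{For} header as an early-termination trigger: once the sweep has produced more than $k$ centers the current radius is too small to yield a feasible $k$-center solution and the sweep can be abandoned so that the next candidate radius is tried. Under this reading, at most $O(k)$ centers are added in any single outer iteration, so one iteration costs $O(nk)$ and the overall time is $O(1/\epsilon)\cdot O(nk)=O(nk/\epsilon)$, absorbing the $O(nk)$ preprocessing.

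For the memory bound, the algorithm retains only the $n$ input points, a boolean mark array of length $n$, and the center set $C$ of size at most $k$, a total of $O(n)$; because pairwise distances are recomputed on demand, no $O(n^2)$ distance matrix is ever materialized. The only nontrivial step, and the one I would foreground in the write-up, is the $O(k)$ bound on centers produced per outer iteration: without the early-termination reading of the $k'<k$ condition, very small candidate radii could in principle force $\Theta(n)$ center additions in a single iteration and inflate the running-time bound to $O(n^2/\epsilon)$.
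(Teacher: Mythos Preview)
Your proof is correct and follows essentially the same approach as the paper: bound the number of candidate radii by $O(1/\epsilon)$ via the geometric schedule, bound the per-radius work by $O(nk)$ via the ``at most $k$ centers, each triggering one $O(n)$ scan'' argument, and multiply. The paper's proof is in fact terser than yours; it simply asserts ``since there are at most $k$ centers'' for each radius and invokes the Taylor expansion $\ln(1+\epsilon)\approx\epsilon$ to get $\log_{1+\epsilon}8=O(1/\epsilon)$, whereas you explicitly justify the $O(k)$ cap via the early-termination reading of the $k'<k$ clause---a point the paper leaves implicit but which, as you note, is what actually prevents the per-iteration cost from degenerating to $\Theta(n^2)$ at small radii.
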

\begin{proof}
For each point, if it has been visited before, the algorithm ignores it, otherwise the algorithm chooses it as the next center, in which case it checks at most $n$ other points. Since there are at most $k$ centers, using aggregate method for amortized analysis, the running time of the algorithm is $nk+n$ for each $r$. The number of values $r$ that are checked by the algorithm is $1+\log_{1+\epsilon} 8$. Using taylor series $\ln (1+x) =x-\frac{x^2}{2}+...\geq x , x\rightarrow 0$, the overall time complexity is $O(\frac{nk}{\epsilon})$.
\end{proof}

\begin{theorem}\label{theorem:memory}
The approximation factor of \Cref{alg:memory} is $2+\epsilon$ for metric $k$-center.
\end{theorem}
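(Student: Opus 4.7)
The plan is to follow the standard parametric-pruning argument adapted to the geometric schedule in \Cref{alg:memory}. Let $r^*$ denote the optimal $k$-center radius and let $R_0$ be the radius returned by Gonzalez's algorithm in line~1; since Gonzalez is a $2$-approximation one has $r^* \leq R_0 \leq 2r^*$, so the initial candidate radius $R_0/2$ is at most $r^*$, and the sequence of radii tried by the outer loop is $R_0/2,\; R_0(1+\epsilon)/2,\; R_0(1+\epsilon)^2/2,\ldots$ until the inner procedure succeeds with at most $k$ centers.

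First I would verify the covering property of the inner double loop at a fixed threshold $R$. The loop scans $P$ in order, promotes each unmarked point to a center, and then marks every point within distance $R$ of it. On exit this is exactly a maximal independent set of the disk graph at radius $R$: the chosen centers are pairwise at distance strictly greater than $R$, and every non-center point has been marked by some center and hence lies within distance $R$ of that center. In particular, the radius achieved by $C$ is at most the threshold $R$ used for that iteration.

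Next I would establish the feasibility guarantee: once the threshold reaches $R \geq 2r^*$, the inner loop produces $|C| \leq k$ and the outer loop terminates. Any two points lying in a common optimal cluster are at distance at most $2r^* \leq R$ and are therefore adjacent in the disk graph, so the maximal independent set contains at most one representative per optimal cluster; pigeonholing over the $k$ optimal clusters yields $|C| \leq k$. Combined with the geometric schedule, the outer loop halts at some threshold $R_{\text{fin}}$ whose predecessor was strictly less than $2r^*$, giving $R_{\text{fin}} < 2(1+\epsilon)r^*$. Applying the covering property at $R_{\text{fin}}$ then produces a covering of $P$ with radius at most $2(1+\epsilon)r^* = (2+2\epsilon)r^*$, which is the claimed $(2+\epsilon)$-approximation after rescaling $\epsilon$ by a constant factor (the running-time analysis already absorbs this).

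The main subtlety is that the outer loop in the pseudocode mixes the symbols $r$ and $R$ and phrases its continuation condition in terms of $k' < k$ rather than $k' > k$; I would read it as the intended parametric pruning loop, namely iterating a single candidate radius geometrically starting from $R_0/2$ and stopping as soon as $|C| \leq k$, consistent with the running-time bound $O(nk/\epsilon)$ stated just above. Given Gonzalez's bound $R_0 \leq 2r^*$, the schedule is guaranteed to reach a feasible threshold within $O(\log_{1+\epsilon} 8) = O(1/\epsilon)$ iterations, so the guard $r \leq 2(1+\epsilon)R$ in the pseudocode does not prematurely halt the search before the feasibility regime is entered.
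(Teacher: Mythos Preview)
Your argument is correct and follows essentially the same route as the paper's own proof: both verify that the inner loop computes a maximal independent set of the disk graph at the current threshold, both observe that once the threshold reaches $2r^*$ each optimal cluster contributes at most one center, and both bound the first feasible threshold by $2(1+\epsilon)r^*$ via the geometric schedule. Your write-up is in fact more careful than the paper's---you explicitly justify why the starting radius $R_0/2$ lies below $r^*$, you spell out the covering property of the inner loop, and you note (correctly) that the resulting bound is $2(1+\epsilon)r^*=(2+2\epsilon)r^*$ and hence requires a constant-factor rescaling of $\epsilon$ to match the stated $2+\epsilon$; the paper simply asserts the latter without comment.
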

\begin{proof}
Consider the disk graph of points with radius $2r^*$, where $r^*$ is the optimal radius. Using this radius, each cluster turns into a clique, so the maximal independent set subroutine of parametric pruning algorithm chooses at most one point from each cluster.

\Cref{alg:memory} computes a maximal independent set of the disk graph of radius $r$ at each step. For $r\geq 2r^*$, at most $k$ points are marked as centers by the algorithm. The algorithm starts from a lower bound on the radius and multiplies it by $(1+\epsilon)$. So, in the worst case the first radius that the algorithm checks which exceeds $2r^*$ is $r\leq 2(1+\epsilon)r^*$.
\end{proof}
\section{Connectivity Preservation and Applications to DBSCAN}
Computing the connected components of a graph is harder than testing the connectivity between two vertices $s,t$ of the graph.
It has been conjectured that sparse $st$-connectivity in $o(\log n)$ rounds and single-linkage clustering in high dimensions cannot be solved using a constant number of MapReduce rounds, by reduction from connectivity problem \cite{grigory}.

In DBSCAN, a point that has at least $f$ other points within distance $\epsilon$ from it is called a \textit{core point}. A cluster is a connected component of the intersection graph of balls of radius $\epsilon$ centered at core points. Any point that is not within distance $\epsilon$ of a core point is an \textit{outlier}.
Therefore, the algorithm can be seen as two main steps: simultaneous range counting queries, and computing the connected components of the disk graph. Both of these problems are challenging in the MapReduce model.

We use dual clustering to solve a non-convex clustering problem in MapReduce.
Several MapReduce algorithms for \textit{density-based spatial clustering of applications with noise (DBSCAN)} has been presented~\cite{mrdbscan,mrdbscan2,mrdbscan3,mrdbscan4,mrdbscan5}.
However, they lack theoretical guarantees.
We use the abstract DBSCAN algorithm \cite{dbscan2017}, which only differs from the original DBSCAN algorithm \cite{dbscan} in its time complexity \cite{dbscanhardness}, but computes the range counting queries prior to computing the connected components of the disk graph.

Several algorithms for range counting queries exist in MapReduce, but it is not possible in the $MRC$ model to run $n$ instances of single-query range search \cite{rangempc} simultaneously, since the data from one machine could be used in the solution of points from $n^{\delta}$ machines, for a constant $\delta>0$.
Mergeable summaries for range counting queries are randomized approximation algorithms which are also composable \cite{mergeable}.
Note that range queries for unit disks in $\mathbb{R}^d$ can be converted into rectangular range queries in $\mathbb{R}^{d+1}$, via linearization \cite{handbook}, therefore, any algorithm for rectangular range counting also solves the problem for disk range counting.

Our core-set for dual clustering of radius $\epsilon/2$, approximately preserves the connectivity of edges of weight at most $\epsilon$ between clusters.
\begin{lemma}\label{lemma:connectivity}
For two cluster centers $c_i,c_j$ of clusters $C_i,C_j$ of radius $\epsilon$, they are said to be connected if there is a point $p,d(p,c_i)\leq \epsilon, d(p,c_j)\leq \epsilon$. \Cref{alg:offline} with radius $\frac{\epsilon}{2}$ detects if such two cluster centers are connected or not.
\end{lemma}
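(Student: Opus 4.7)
My plan is to prove the lemma by comparing what can happen on the original points $P$ with what survives in the coreset $C$ produced by Algorithm~\ref{alg:offline} run with radius $\epsilon/2$. I will handle the two halves of ``detects'' separately: completeness (true connections are witnessed in $C$) and soundness (non-connections are not invented by $C$).

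For completeness, suppose there is a bridge point $p \in P$ with $d(p,c_i)\le\epsilon$ and $d(p,c_j)\le\epsilon$. By Theorem~\ref{theorem:dual} (together with the maximal-independent-set covering guarantee of Algorithm~\ref{alg:cbc}), the output $C$ is a dual-clustering cover of $P$ at radius $\epsilon/2$; in particular some $p'\in C$ satisfies $d(p,p')\le\epsilon/2$. The triangle inequality then yields
\[
d(p',c_i)\;\le\;d(p',p)+d(p,c_i)\;\le\;\epsilon/2+\epsilon\;=\;3\epsilon/2,
\]
and symmetrically $d(p',c_j)\le 3\epsilon/2$, so $p'$ acts as a coreset witness to the connection at the (slightly inflated) detection radius $3\epsilon/2$.

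For soundness, note that $C\subseteq P$, so any candidate bridge $p'\in C$ discovered during a coreset-level nearest-neighbor test against $c_i$ and $c_j$ is itself an honest input point, and therefore no connection is reported by $C$ that does not already hold in $P$. Combining the two halves, running the same nearest-neighbor test against $c_i$ and $c_j$ on $C$ instead of $P$ reproduces the true connectivity relation up to a controlled radius inflation of at most $\epsilon/2$. The main obstacle is exactly this asymmetry: a completeness guarantee at radius $3\epsilon/2$ paired with a soundness guarantee at radius $\epsilon$ does not immediately give a clean ``detects or not'' statement. I expect the proof to absorb this either by running the coreset-level test at the enlarged radius $\epsilon+\epsilon/2$ (consistent with the ``approximately preserves'' wording in the paragraph preceding the lemma), or by invoking Lemma~\ref{lemma:kissing0} one more time to shrink $d(p,p')$ to a smaller multiple of $\epsilon$, so that the coreset threshold can be made to coincide with the original connectivity threshold within a factor tolerable by the DBSCAN application.
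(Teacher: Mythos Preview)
Your reading of the lemma differs from the paper's. You treat $c_i,c_j$ as externally given centers and ask whether the coreset $C$ contains a \emph{bridge point} $p'$ witnessing the connection. The paper instead takes $c_i,c_j$ to be the centers output by Algorithm~\ref{alg:offline} itself (so the clusters in play have radius $\epsilon/2$, despite the ``radius $\epsilon$'' in the statement), and ``detection'' amounts to checking the single distance $d(c_i,c_j)$: if $p$ is assigned to $c_i$ and $q$ to $c_j$ with $d(p,q)\le\epsilon$, then $d(c_i,c_j)\le d(c_i,p)+d(p,q)+d(q,c_j)\le \epsilon/2+\epsilon+\epsilon/2=2\epsilon$ (the paper's displayed bound of $\epsilon$ appears to be a slip). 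This route never searches for a coreset witness at all; it simply pushes the connectivity test onto the pairwise distances among the centers.

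Your completeness/soundness decomposition is a legitimate alternative and is more explicit about the radius inflation than the paper's argument; the $3\epsilon/2$ you obtain is of the same order as the $2\epsilon$ the paper's route actually yields. The obstacle you flag---that the coreset-level threshold must be enlarged relative to $\epsilon$---is real in both approaches and is precisely what the paper's preceding ``approximately preserves'' sentence concedes. Your proposed fix of invoking Lemma~\ref{lemma:kissing0} again would not close the gap, since the $\epsilon/2$ covering radius is already the parameter fed to Algorithm~\ref{alg:offline}; shrinking it further just changes constants, not the structure of the asymmetry. The cleaner resolution is the first one you name: run the coreset-level test at the inflated radius, which is also what Algorithm~\ref{alg:dbscan} implicitly relies on.
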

\begin{proof}
By definition of clustering, the distances from each point to its cluster center is at most $\epsilon/2$, so
\[
d(p,c_i)\leq \epsilon/2, d(q,c_j)\leq \epsilon/2.
\]
If $d(p,q)\leq \epsilon$, then using triangle inequality twice gives the following results:
\[
d(c_i,c_j) \leq d(c_i,p)+d(p,c_j) \leq \epsilon/2+\epsilon/2 = \epsilon.
\]
\end{proof}

\Cref{alg:dbscan} with the minimum number of points of each cluster set to one, and then using the dual clustering, can be used to solve the DBSCAN problem in doubling metrics.

\begin{algorithm}[h]
\caption{A Coreset for DBSCAN of Core Points}
\label{alg:dbscan}
\begin{algorithmic}[1]
\Require{A set of sets of core points $\cup_{i=1}^L S_i$, a radius $\epsilon$}
\Ensure{A non-convex clustering of $\cup_{i=1}^L S_i$}
\State{$C_i$=The output of \Cref{alg:offline} with radius $\frac{\epsilon}{2}$ on set $S_i$.}
\State{Send centers from each machine to the first machine to form the set $C=\cup_i C_i$.}
\State{$T=$The connected components of the disk graph of radius $\epsilon$ on $C$.}
\State{Send $T$ to all machines.}
\State{Connect each point to its nearest neighbor in $T$.}
\end{algorithmic}
\end{algorithm}

\begin{theorem}
\Cref{alg:dbscan} solves DBSCAN using $O(1)$ rounds of MapReduce, given that $L=o(n), m =o(n),kL=O(m)$, where $k$ is the size of the output.
\end{theorem}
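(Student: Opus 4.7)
The plan is to verify three aspects of \Cref{alg:dbscan}: its round complexity, its per-machine memory footprint, and the correctness of the output clustering. Round count is immediate: step~1 runs locally on each of the $L$ machines, step~2 is a single gather round to machine~1, step~3 is a local connected-components computation on machine~1, step~4 is a single broadcast round, and step~5 is local. Two MapReduce rounds suffice, which is $O(1)$.

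For memory, the balanced partition gives every machine $O(n/L)=O(m)$ points initially (using $L=o(n)$), the coreset sizes are $|C_i|=O(k)$ by \Cref{theorem:mrdbscan} applied with radius $\epsilon/2$, and machine~1 holds $|C|=\sum_{i=1}^{L} |C_i|=O(kL)$ after step~2, which fits in memory by the hypothesis $kL=O(m)$. The component labels broadcast in step~4 have the same size $O(kL)=O(m)$ and fit on each machine, so every per-machine memory constraint is respected throughout.

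For correctness, the pivotal tool is \Cref{lemma:connectivity}. Given two core points $p,q$ with $d(p,q)\leq \epsilon$ and their nearest coreset representatives $c_p,c_q\in C$ (which satisfy $d(p,c_p),d(q,c_q)\leq \epsilon/2$ since the dual clustering is invoked with radius $\epsilon/2$), the lemma guarantees that $c_p$ and $c_q$ are connected in the disk graph of $C$ at radius $\epsilon$. Lifting every DBSCAN edge this way, any density-connected path in the input induces a corresponding path in the disk graph on $C$, so the connected components computed in step~3 coincide with the DBSCAN clusters of the core points; step~5 then assigns each core point to the component of its nearest representative in $T$.

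The main obstacle is the reverse direction: two representatives adjacent at radius $\epsilon$ in $C$ can correspond to input points at distance up to $2\epsilon$ by triangle inequality, which could in principle merge DBSCAN clusters that are genuinely separate. I would handle this either by appealing to the abstract DBSCAN formulation of \cite{dbscan2017}, which tolerates constant-factor slack in density connectivity, or by running the dual clustering step with a smaller radius such as $\epsilon/4$ so that every coreset adjacency lifts to a genuine $\epsilon$-edge between the assigned points. Range counting for core-point identification is handled by the mergeable-summary infrastructure described in the text preceding the algorithm and does not affect the round count.
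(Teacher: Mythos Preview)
Your argument follows the same structure as the paper's: count the rounds step by step, bound the per-machine coreset size and invoke the hypothesis $kL=O(m)$ for the gather, and appeal to \Cref{lemma:connectivity} for correctness. Two small discrepancies are worth noting.

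First, for the size bound on each local coreset you cite \Cref{theorem:mrdbscan}. The paper instead uses \Cref{theorem:dual}, which is the more natural reference here: the radius $\epsilon/2$ is \emph{given}, so this is a dual-clustering instance, and \Cref{theorem:dual} directly yields $|C_i|\le D^2 k=O(k)$. \Cref{theorem:mrdbscan} is stated for $k$-center (where the radius is unknown and must be $2$-approximated) and carries the extra $(\tfrac{8}{\epsilon})^{2\log_2 D}$ factor, which is unnecessary in this setting.

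Second, the ``reverse direction'' obstacle you raise---that adjacency of coreset centers at radius $\epsilon$ need not lift back to a genuine $\epsilon$-edge between input points---is a legitimate concern, but the paper does not treat it. The paper's proof simply asserts, via \Cref{lemma:connectivity}, that $T$ has the same connected components as the optimal DBSCAN clustering and declares the output exact; it neither invokes the abstract-DBSCAN slack of \cite{dbscan2017} nor shrinks the radius further. So on this point you are being more careful than the paper itself, and your proposed remedies go beyond what the published proof supplies.
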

\begin{proof}
Using \Cref{lemma:connectivity}, $T$ has the same connected components for the points of $C$ as the optimal solution. Therefore, connecting each point to its nearest neighbor in $T$ gives an exact DBSCAN clustering.

Let $k$ be the number of points required to represent the clusters. Based on \Cref{theorem:dual}, the number of points returned by the algorithm is at most $D^2k=O(k)$.
Sending $O(k)$ data from $L$ machines to one machine requires $kL=O(m)$. 

Running \Cref{alg:offline} takes $O(1)$ rounds, computing the union and sending the clusters to all machines each take one round. So, the total number of rounds is $O(1)$.
\end{proof}
Note that in \Cref{alg:dbscan}, even without sending the points to a single machine, the set $C$ in \Cref{alg:dbscan} is still a composable core-set for DBSCAN.

\section{Experimental Results}
Description of data sets used in our experiments is depicted in \Cref{table:datasets}. Euclidean distance is used for all data sets. Note that DEXTER data set is not doubling, since it has a higher dimension than the number of its instances.
\begin{table}[h]
\centering
\begin{tabular}{|c|c|c|p{4cm}|}
\hline
Data Set & \# of Instances  & \# of Dimensions & Preprocessing\\
\hline
Parkinson \cite{parkinson} & 5875 & 26 & - \\
DEXTER \cite{dexter} & 2600 & 20000 & - \\
\hline
Power & 2049280 & 7 & No missing values, numerical attributes only\\
Higgs \cite{higgs} & 11000000 & 7 & $7$ high-level attributes only\\
\hline
\end{tabular}
\caption{Properties of data sets used in our experiments, obtained from \cite{datasets}.}
\label{table:datasets}
\end{table}
The size of data chunks used for partitioning the data is $m=10000$.
\subsection{Randomized Gonzalez vs. Randomized Parametric Pruning}
In \Cref{sec:compare}, we proved that the solutions of Gonzalez's algorithm are a subset of the solutions of the parametric pruning algorithm.
We compare the randomized version of these algorithms, where arbitrary choices in these algorithms are replaced by randomized ones. Then, we empirically compare the approximation factor of the resulting algorithms.

\begin{figure}[H]
\centering
\includegraphics[scale=0.6,trim={2cm 8.5cm 2cm 8.5cm},clip]{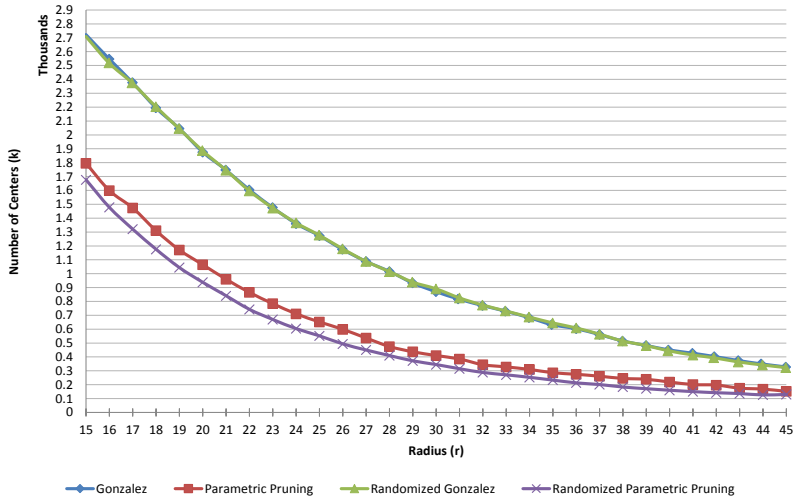}
\caption{A comparison of Gonzalez's greedy algorithm and the parametric pruning algorithm on Parkinson data-set.}
\label{fig:parkinson}
\end{figure}
\begin{figure}[H]
\centering
\includegraphics[scale=0.6,trim={2cm 8.8cm 2cm 8.8cm},clip]{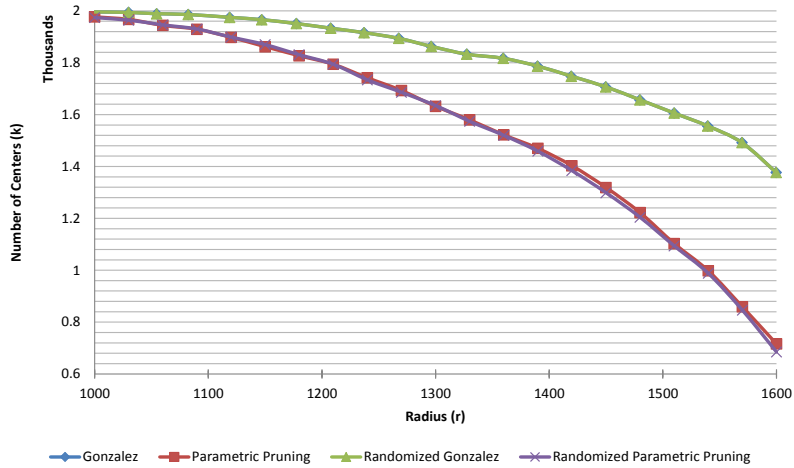}
\caption{A comparison of Gonzalez's greedy algorithm and the parametric pruning algorithm on Dexter data-set.}
\label{fig:dexter}
\end{figure}

The experiments show that the effect of randomization when choosing the points is slight, however, the differences between the approximation factor of the algorithms are more significant. In \Cref{fig:parkinson}, the results of the algorithm for a data-set in low dimensional Euclidean space, which is a doubling metric are given. \Cref{fig:dexter} shows the results for a high-dimensional Euclidean space, which is not necessarily a doubling metric.
\subsection{A Comparison in MapReduce}
In this experiment, we compared the approximation factor of the efficient parametric pruning algorithm (\Cref{alg:memory}) using $\epsilon=0.01$ with the greedy algorithm of Gonzalez extended to MapReduce \cite{gonzalez}.

\begin{figure}[H]
\centering
\includegraphics[scale=0.7,trim={2cm 9.5cm 2cm 10.9cm},clip]{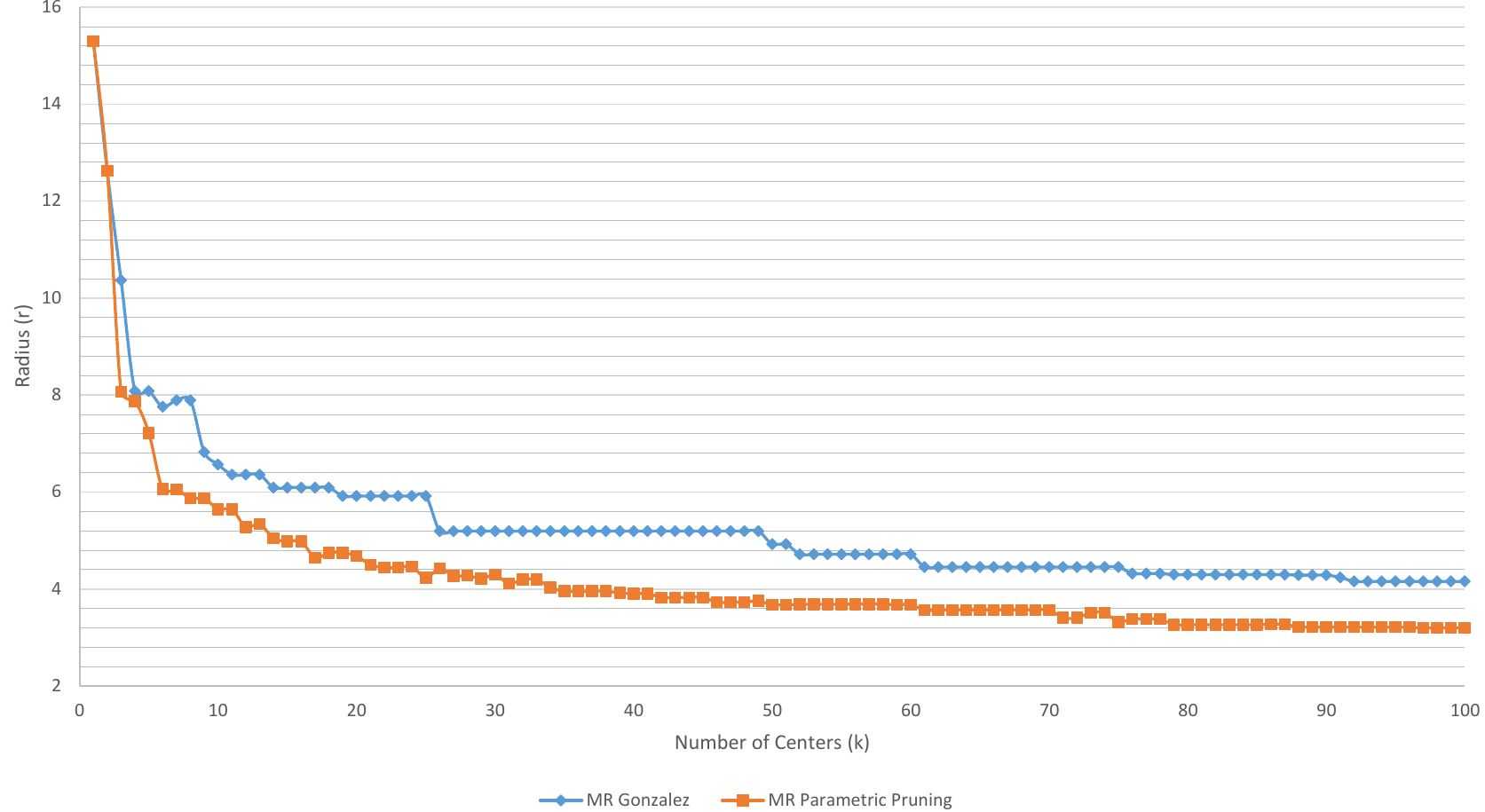}
\caption{A comparison of Gonzalez's greedy algorithm and the parametric pruning algorithm on Higgs data-set.}
\label{fig:higgs}
\end{figure}
\begin{figure}[H]
\centering
\includegraphics[scale=0.7,trim={2cm 9.5cm 2cm 11cm},clip]{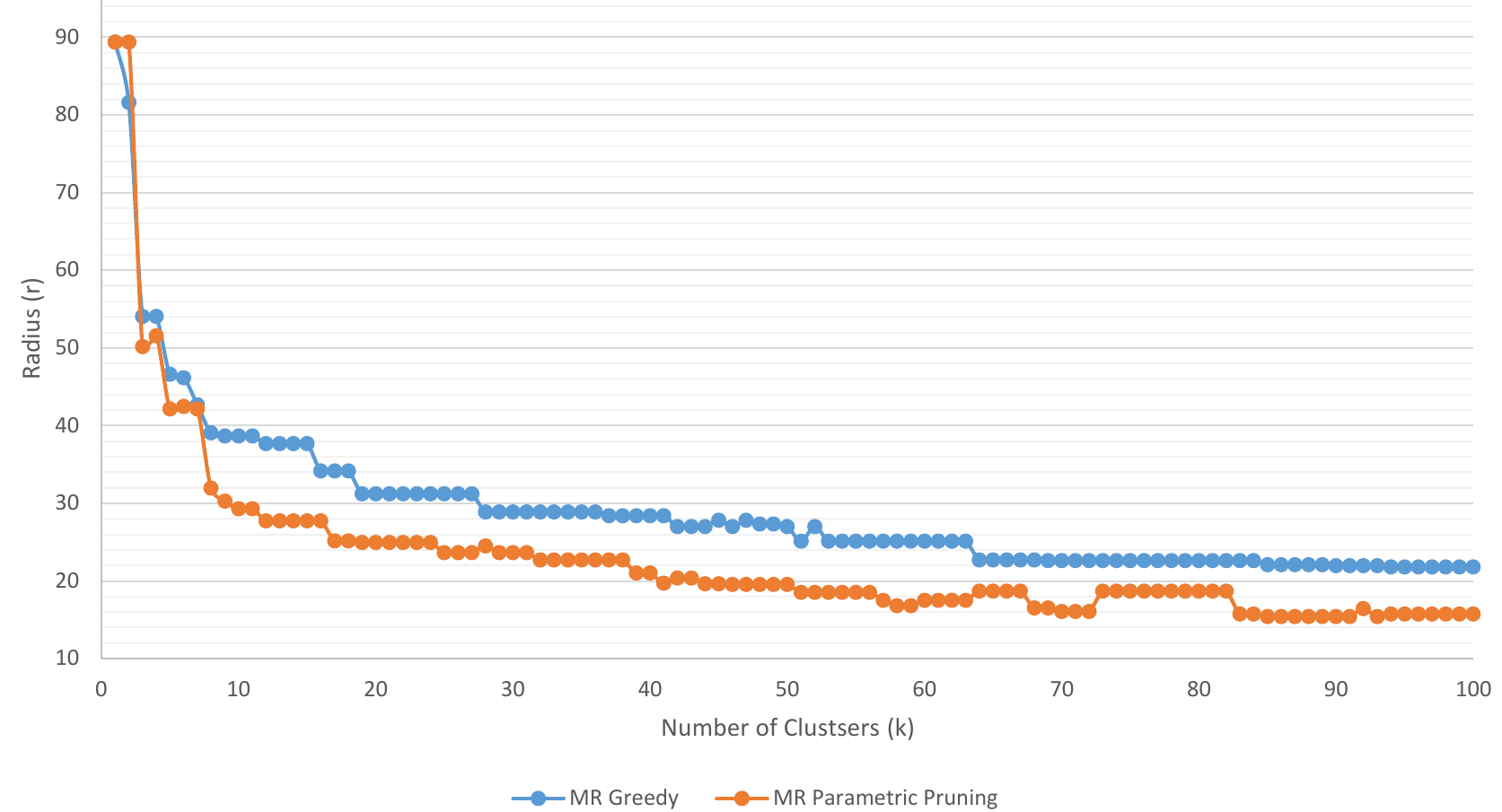}
\caption{A comparison of Gonzalez's greedy algorithm and the parametric pruning algorithm on Power data-set.}
\label{fig:power}
\end{figure}

The radii of Gonzalez's greedy algorithm in MapReduce are $1.3$ times the radii of parametric pruning algorithm on average on Higgs (\Cref{fig:higgs}) and Power (\Cref{fig:power}) data-sets.
\section{Conclusions}
Gonzalez's algorithm \cite{gonzalez} is a special case of parametric pruning algorithm \cite{vazirani} in which the greedy maximal independent set computation prioritizes the points with the maximum distance from the currently chosen points. 
Our algorithm and trade-off partially answers the open question of \cite{kcenter2} about comparing and improving these two algorithms in MapReduce. We propose a modified parametric pruning algorithm with running time $O(\frac{nk}{\epsilon})$ that achieves a better approximation factor in practice. Finding algorithms with provable approximation factor $2$ in the worst-case and better approximation factors on average remains open.

We also proved that the best possible trade-off between the approximation factor and the number of centers of $k$-center in doubling metrics is exponential.

Our composable coreset for dual clustering gives constant factor approximation for minimizing the size of DBSCAN cluster representatives given that the neighbor-counting is done prior to computing the coreset and the connected components. Finding a summarization technique that can preserve both the number of near neighbors and the connectivity between clusters in general metrics remains open. Note that in doubling metrics, keeping the number of points assigned to each center approximately solves this problem.

\bibliographystyle{abbrv}      
\bibliography{template.bbl}   

\end{document}